\documentclass[%
    aps,superscriptaddress,  
    pra, 
    nofootinbib, %
    reprint,
    a4paper,
    longbibliography
    ]{revtex4-1} %


\usepackage{nicefrac}
\usepackage[english]{babel}
\usepackage{soul} 

\usepackage{array} 
\usepackage{hyperref}
\usepackage{xcolor}
\definecolor{mylinkcolor}{rgb}{0,0,0.5} 
\hypersetup{unicode=true, %
  bookmarksnumbered=false,bookmarksopen=false,bookmarksopenlevel=1, %
  breaklinks=true,pdfborder={0 0 0},colorlinks=true}%
\hypersetup{%
  anchorcolor=mylinkcolor,citecolor=mylinkcolor, %
  filecolor=mylinkcolor,linkcolor=mylinkcolor, %
  menucolor=mylinkcolor,runcolor=mylinkcolor, %
  urlcolor=mylinkcolor}%

\usepackage{bbm} 
\usepackage{amsmath}
\usepackage{graphicx}
\usepackage{amsthm}
\usepackage{enumerate}
\usepackage{tikz}
\usepackage{amsfonts}

\AtBeginDocument{%
    \newwrite\bibnotes
    \def\bibnotesext{Notes.bib}
    \immediate\openout\bibnotes=\jobname\bibnotesext
    \immediate\write\bibnotes{@CONTROL{REVTEX41Control}}
    \immediate\write\bibnotes{@CONTROL{%
    apsrev41Control,author="08",editor="1",pages="1",title="0",year="0"}}
     \if@filesw
     \immediate\write\@auxout{\string\citation{apsrev41Control}}%
    \fi
  }%

\makeatletter
\def\pdfstartlink@attr{attr{/Border[0 0 0 [1 5] ]/H/I/C[0 1 1]}}%
\def\@@Doi#1{\textcolor{mylinkcolor}{#1}\@@endlink}
\makeatother

\def\ForTexCount\section#1{} 

\newtheorem{thm}{Theorem} 
\newtheorem{prop}{Proposition} 
\newtheorem{lem}{Lemma} 
\newtheorem{cor}{Corollary} 
\theoremstyle{definition}
\newtheorem{ex}{Example} 



\newcommand{\ket}[1]{| #1 \rangle}

\newcommand{\zero}{\uparrow}
\newcommand{\one}{\downarrow}

\newcommand{\cS}{\mathcal{C}}
\newcommand{\sta}{\mathcal{S}}

\newcommand{\cN}{\mathcal{N}}
\newcommand{\cQ}{\mathcal{Q}}
\newcommand{\cT}{\mathcal{T}}
\newcommand{\me}{\mathcal{M}}
\newcommand{\eff}{\mathcal{E}}

\newcommand{\ot}{\otimes}
\newcommand{\ext}{\mathrm{ext}}
\newcommand{\chsh}{\mathrm{CHSH}}
\newcommand{\ch}{\mathrm{CH}}
\newcommand{\eps}{\varepsilon}



\begin{document}

\title{Towards correlation self-testing of quantum theory in the adaptive
  Clauser-Horne-Shimony-Holt game}
\author{Mirjam Weilenmann}
\email{mirjam.weilenmann@oeaw.ac.at}
\affiliation{Institute for Quantum Optics and Quantum Information (IQOQI) Vienna,
Austrian Academy of Sciences, Boltzmanngasse 3, 1090 Vienna, Austria}
\author{Roger Colbeck}
\email{roger.colbeck@york.ac.uk}
\affiliation{Department of Mathematics, University of York, Heslington, York, YO10 5DD, United Kingdom}
\date{$15^{\text{th}}$ January 2024}

\begin{abstract} 
  Correlation self-testing of a theory addresses the question of whether we can identify the set of correlations realisable in a theory from its performance in a particular information processing task.  Applied to quantum theory it aims to identify an information processing task whose optimal performance is achieved only by theories realising the same correlations as quantum theory in any causal structure. In \href{\doibase 10.1103/PhysRevLett.125.060406}{[Phys.\ Rev.\ Lett.\ {\bf 125} 060406 (2020)]} we introduced a candidate task for this, the adaptive CHSH game. Here, we analyse the maximum probability of winning this game in different generalised probabilistic theories. We show that theories with a joint state space given by the minimal or the maximal tensor product are inferior to quantum theory, before considering other tensor products in theories whose elementary systems have various two-dimensional state spaces.  For these, we find no theories that outperform quantum theory in the adaptive CHSH game and prove that it is impossible to recover the quantum performance in various cases.  This is the first step towards a general solution that, if successful, will have wide-ranging consequences, in particular, enabling an experiment that could rule out all theories in which the set of realisable correlations does not coincide with the quantum set.
\end{abstract}
\maketitle

\section{Introduction}

Correlation self-testing means certifying that some information processing task (or set of tasks) can only be completed by a theory possessing a specific set of correlations~\cite{selftest_PRL}, for instance by a theory that has exactly the same correlations as quantum mechanics. Identifying such a task is of interest because it allows a new characterization of quantum theory as a theory that (within a set of probabilistic theories) is optimal for this task. This opens up the possibility of an experiment that can more directly exclude other theories because they lead to more restrictive correlations in the task at hand. Crucially, if an appropriate task is considered, theories that allow for more non-local bipartite correlations than quantum theory (such as box-world~\cite{Barrett2007}) can be inferior to quantum theory, in spite of the larger set of correlations realisable in the usual Bell scenario~\cite{selftest_PRL}.

Given a particular task we can consider the correlations that a given theory can realise in the scenario associated with that task. For instance, in a Bell experiment~\cite{Bell1964}, the CHSH value in general increases with the size of the bipartite state space~\cite{CHSH}.  In order to single out quantum theory, we need a task that, for a sufficiently large state space, becomes more difficult as the state space increases. An example of such a task is the adaptive CHSH game that we introduce in a companion paper~\cite{selftest_PRL}. It is a candidate for a task where theories that produce the correlations of quantum mechanics may perform optimally. 

This paper can be seen as a development of a line of research looking at ways to understand quantum mechanics from an information-theoretic perspective. The laws of quantum theory are usually introduced via a set of mathematical axioms, and there have been many attempts to understand the significance of these in other ways. One of the first insights was that requiring that signals cannot be transmitted faster than light is not sufficient for singling out quantum theory~\cite{Cirelson93,Popescu1994}. A variety of different theories do not allow superluminal signalling, for instance, generalised probabilistic theories (GPTs)~\cite{Hardy2001, Barrett2007}. Subsequently, identifying a principle that singles out quantum mechanics, has been a key question in quantum foundations. The proposals range from requiring communication complexity to be non-trivial~\cite{Brassard2006}, the impossibility of non-local computation~\cite{Linden2007} and the principle of \emph{Information Causality}~\cite{Pawlowski2009} (which imposes restrictions on random access coding) to principles such as Macroscopic Locality~\cite{Navascues2010a} and the multipartite principle of Local Orthogonality~\cite{Fritz2013b}. These principles are intended to represent properties of `reasonable' theories, but are insufficient to point uniquely to quantum correlations. In particular, the principles are obeyed by \emph{almost quantum correlations} (it remains possible that information causality is an exception, but numerical evidence suggests not)~\cite{Navascues2015}.

Correlation self-testing gives a new perspective on this problem. Instead of searching for a principle that excludes certain theories as implausible because of their implications, e.g.\ on information processing, we ask with respect to which aspects quantum mechanics is optimal. In particular, we seek to find information processing tasks for which quantum mechanics gives the optimal performance, while other theories are inferior.  Tasks with such properties are good candidates for self-tests of quantum theory.

After outlining self-testing in more detail and describing the framework of GPTs in Section~\ref{sec:prelim}, we proceed (see Section~\ref{sec:results}) to outline the adaptive CHSH game, whose winning probability is upper bounded by $\frac{3}{4}$ classically and $\frac{1}{2}+\frac{1}{2\sqrt{2}}$ quantum mechanically, proving that the latter bound is achievable. We then show that in any GPT where the joint state space is formed by taking the minimal or maximal tensor product this game cannot be won with a probability larger than with a classical strategy (Section~\ref{sec:tp}). To move beyond the minimal and maximal tensor product we consider systems in GPTs where the local state spaces are regular polygons in two dimensions with $n$ extremal states. For these we find that regular polygons with odd $n$ perform strictly worse than quantum theory (Section~\ref{sec:odd_n}) and we also prove the same for any GPT with local systems with $n=4$ (Section~\ref{sec:even_n}; for larger even $n$ the problem remains partially open).  In addition, we discuss the case of self-dual theories in Section~\ref{sec:sd}, for which we also find that they perform worse than quantum mechanics. Finally, in Section~\ref{sec:conclusion}, we conclude and discuss the remaining open problems and further research directions.


\section{Preliminaries} \label{sec:prelim}

In this section we introduce the notion of correlation self-testing and its relation to other forms of self-testing as well as the tools needed in order to understand the remaining sections of this article, in particular the framework of GPTs.

\subsection{Self-testing of physical theories}
We consider an information processing task in which there are some number of (classical) inputs, labelled $X_i$ and (classical) outputs $A_i$ and we use ${\bf X}$ and ${\bf A}$ to denote the collection of all inputs and outputs respectively.  Given the setup for the task (which mathematically corresponds to constraints on the way the variables can be related) and a theory, particular conditional distributions $P_{{\bf X}|{\bf A}}$ can be realised.  An information processing task can be characterised by a function on the inputs and outputs, whose expectation value we want to maximize.  Alternatively, we can express this expectation value as a real-valued function $f$ on the conditional distributions.  We would like to find a setup, $\cS$, and a function, $f$, whose maximum over all realisable correlations in all theories\footnote{We do not say here what we mean by a theory because it makes sense to consider self-testing for various definitions. However, when we come on to the technical results of this paper, we will consider GPTs.}
can be achieved by quantum mechanics, i.e., we seek a case where (informally)
\begin{equation}
\max_{\{P_{{\bf X}|{\bf A}}\}}f(P_{{\bf X}|{\bf A}})=\max_{\{P_{{\bf X}|{\bf A}}\in\cQ(\cS)\}}f(P_{{\bf X}|{\bf A}})\,, \label{eq:self-test}
\end{equation}
where $\cQ(\cS)$ is the set of distributions realisable in quantum theory in the setup $\cS$, and the first maximum is over distributions realisable in $\cS$ in any theory.
Furthermore, we would like it to be the case that quantum mechanics outperforms as many alternative theories as possible, ideally being the unique theory that can achieve this maximum within the set of physical theories under consideration.  If so, this task provides a way to \emph{self-test} quantum theory within this set of theories.  Given such a task, an experiment could be performed showing that quantum mechanics is the only theory compatible with the observed results, thus providing a much more direct test of quantum theory than ever before. 

The set of theories we consider in this work are the GPTs (see Section~\ref{sec:gpt} for more details). Achieving complete self-testing of quantum theory in this set is not possible, because several theories can give rise to the same sets of correlations.\footnote{When speaking about the correlations of a theory we implicitly refer to the closure of the set of correlations.} For instance, quantum theory over the real numbers gives the same correlations as complex quantum mechanics in many scenarios~\cite{MMG}. The same holds if we aim to self-test specific aspects of quantum theory rather than the theory as a whole, for instance its state space. However, the above task is still of interest, since being able to self-test quantum theory in a certain subset of all GPTs (that do not lead to the same correlations), could be useful as a way  to rule out some of these theories experimentally. This may for instance be of interest if there is a particular subset of GPTs with natural properties that are competing with quantum theory. 

Due to the aforementioned restrictions on self-testing from all conceivable theories, we consider the task of \emph{correlation self-testing}. Applied to quantum theory, this related but weaker question is whether we can find a task where any theory capable of achieving the optimum has the same set of realisable correlations as quantum theory.  In principle, there could be several different theories (including quantum mechanics) that can achieve the same optimum in that task. If all such theories necessarily allow the same set of realisable correlations in any task, then we are able to correlation self-test quantum theory using the task at hand.  In other words, with respect to the set of theories under consideration, correlation self-testing of quantum theory (with a single task $f$) means that if~\eqref{eq:self-test} holds and a theory gives rise to correlations $\cT(\cS)$ such that
\begin{equation}\label{eq:Qopt}
\max_{\{P_{{\bf X}|{\bf A}}\in \cT(\cS) \}}f(P_{{\bf X}|{\bf A}})=\max_{\{P_{{\bf X}|{\bf A}}\in \cQ(\cS) \}}f(P_{{\bf X}|{\bf A}}),
\end{equation}
then we have $\cT(\cS')=\cQ(\cS')$ in all causal structures $\cS'$.\footnote{In the context of this work, these are diagrams indicating the allowed flow of information. For a more technical account see, e.g.,~\cite{review}.} In other words, correlation self-testing of quantum theory with a single task $f$ means that, if for a theory $\cT$ there is a setup $\cS$ where $\cT(\cS)\neq\cQ(\cS)$, then~\eqref{eq:Qopt} does not hold and the left-hand side is smaller than the right.

Correlation self-testing quantum theory with a set of tasks $\{f_i\}_i$ is possible if whenever another theory leads to a different set of correlations in some scenario, then for one of the tasks in $\{f_i\}_i$, quantum theory outperforms it. An important step towards correlation self-testing is thus to identify tasks where quantum theory outperforms other known theories. Note that finding such tasks is not only relevant for correlation self-testing but also for GPT self-testing in certain sets of theories as introduced above.

In this work, we make a step towards correlation self-testing by asking for which tasks, $f$, it holds that 
\begin{equation} 
  \max_{\{P_{{\bf X}|{\bf A}}\in \cT(\cS) \}}f(P_{{\bf X}|{\bf A}}) \leq \max_{\{P_{{\bf X}|{\bf A}}\in \cQ (\cS)\}}f(P_{{\bf X}|{\bf A}}) \,, \label{eq:self-test2}
\end{equation}
where $\cT(\cS)$ is the set of correlations a theory can achieve in the corresponding causal structure, $\cS$, and where the inequality is strict for as many theories as possible. Specifically, we analyse the adaptive CHSH game (see Section~\ref{sec:results}) for a variety of theories, establishing that~\eqref{eq:self-test2} holds for the theories considered and proving in the majority of analysed cases that the inequality is strict.

\subsection{Generalised probabilistic theories}  \label{sec:GPT} \label{sec:gpt}
From an operational viewpoint, any physical experiment consists of the preparation of a system, a transformation of it and a subsequent measurement that yields some outcomes. Mathematically, we model a system's state space $\sta$ using a compact convex subset of a real vector space $V$.\footnote{The elements of the vectors are related to the probabilities of measurement outcomes, and convexity is justified by the notion that it is possible to prepare a system in state $S_1$ with probability $p$ or $S_2$ with probability $(1-p)$ and then forget the preparation, resulting in $pS_1+(1-p)S_2\in\sta$.}

For each type of system there is an associated effect space, $\eff$, a compact convex subset of the dual vector space to $V$ (denoted $V^*$) made up of linear maps ${\sta\to[0,1]}$ called effects. For a system with state space $\sta \subset V$, the maximal possible effect space $\eff_\mathrm{max}$ is,
 \begin{align}
 \eff_\mathrm{max}(\sta)= \left\{ e \in V^* \mid 0 \leq e(S) \leq 1 \ \forall S \in \sta \right\}.
 \end{align}
 Any effect space $\eff$ contains a \emph{unit effect} $u \in\eff$, which has the property that $u(S)=1$ for all $S\in\sta$. If not stated otherwise, we consider the maximal effect space to any state space, since bounds on the performance in the adaptive CHSH game with the maximal effect space then remain valid under restrictions on the effect space. This means that although we use the \emph{no-restriction hypothesis}~\cite{NoR1, NoR2}, our results are not restricted to theories that satisfy it.  
 
 A measurement $M \in \me$ is a collection of effects that sum to the unit effect, i.e., we can write $M=\{e^x\in\eff:\sum_xe^x=u\}$.  The interpretation of $e^x(S)$ is the probability of outcome $x$ when $M$ is performed on a system in state $S$.

Transformations may be applied to a system between preparation and measurement. These are linear maps from the state space to itself that for each type of system include an identity transformation, $I$. Since a transformation followed by a measurement is equivalent to a different measurement, we will not need to explicitly consider the set of possible transformations.

In principle, states in a theory could require an infinite number of parameters to specify. For instance, this would be the case if the theory had an infinite number of different measurements whose outcomes are unrelated to one another.  However, in many theories it is possible to infer the outcome probabilities of some measurements from those of others. A collection of measurements that (for any state of a particular system) allows us to infer the outcome probabilities for all other measurements is called a \emph{set of fiducial measurements} for that system. Here, we consider only theories where the state of each system can be completely characterised in terms of a finite number of fiducial measurements with a finite number of outcomes each, so a finite vector is sufficient to represent the state.\footnote{Some representations have redundancies, e.g., if their components contain all the outcome probabilities of a particular measurement then there is a redundancy due to normalisation; likewise there are redundancies if states are written in terms of more measurements than a minimal set of fiducial measurements.} The extremal states of $\sta$ are called \emph{pure}, in analogy to the quantum case. Note that certain such theories, for instance  classical physics, have systems with state spaces that have a finite number of pure states, while for others, such as quantum theory, the number of pure states is infinite.

A system whose state space can be characterised by two two-outcome fiducial measurements is called a \emph{gbit}. The state space of a gbit can be written as a convex shape in $\mathbb{R}^2$, i.e., states can be represented by vectors $(P(0|0),P(0|1))$ (the (redundant) probabilities of obtaining outcome $1$ for each fiducial measurement are not needed). Depending on the number of pure states there are different polygonal state spaces (including all cases with finitely many pure states) and state spaces with (partly) curved boundary (for infinitely many pure states).  We consider regular polygons, which have the property that for any two pure states there is a reversible transformation that maps one to the other~\cite{JanottaThesis}.\footnote{Truncated versions of these polygon state spaces may also have this property~\cite{JanottaThesis}.} In a classical system all pure states are perfectly distinguishable and the state space is a simplex. In 2D, this is a trit with $n=3$. The following example illustrates the state space for $n=4$.

\begin{ex} \label{ex:gbit} Consider the state space $\sta_{\Box}$ spanned by the four extremal states $(0,0)$, $(0,1)$, $(1,0)$ and $(1,1)$, i.e., the state of a gbit in this theory lies within this space where each co-ordinate represents the probability of outcome $0$ for one of two fiducial measurements. This corresponds to a system where there can at the same time be complete certainty about the outcomes of both fiducial measurements (this situation differs qualitatively from the quantum case).
\end{ex}

\subsubsection{Bipartite systems} \label{sec:bipartite}
Two systems $A$ and $B$ can be thought of as parts of a single joint
system $AB$.  A minimal requirement on the joint state space is that if $S_A\in\sta_A$ and
$S_B\in\sta_B$ then $S_A\ot S_B\in\sta_{AB}$. States that can be written as $S_A\ot S_B$ are called \emph{product} and convex combinations thereof are \emph{separable}, while all other states are called \emph{entangled}. Analogously, measurements $M=\{e^x\}_{x}\in\me_A$ and
$N=\{f^y\}_{y}\in\me_B$ can be composed into
$L\in\me_{AB}$, made up of \emph{product effects}, i.e.,  $L=\{e^x\ot f^y\}_{x,y}$. Product effects act on product states as ${(e^x \ot f^y)(S_A\ot S_B)}=e^x(S_A)f^y(S_B)$. This implies that the theories we consider are non-signalling, since if $\{e_a^x\}_x\in\me_A$ and $\{f_b^y\}_y\in\me_B$ are measurements for $a=1,\ldots,n_a$ and $b=1,\ldots,n_b$, then
\begin{align*}
p(y|a,b)
\!= \! \sum_x(e_a^x\ot f_b^y)(S_{AB}) \! = \! (u_A\ot f_b^y)(S_{AB}) \! = \! p(y|b)\,,
\end{align*}
which is independent of $a$, and analogously for the other no-signalling conditions. Note that this premise is non-trivial and may exclude some theories. For instance, formulations of quantum theory over the quaternions have difficulties with defining tensor products~\cite{AaronsonBlog, Baez}. It is, however, natural to impose no-signalling in order to allow such theories to be compatible with special relativity.

Performing a measurement $M=\{e^x\}_{x}\in\me_A$ on a subsystem $A$ of a multi-partite system $AB$, means applying the maps $e^x \otimes I_B$ to $S_{AB}$. The post-measurement state on $B$ after observing outcome $x$ is
\begin{equation} \label{eq:condstate}
S_{B|x}= \frac{(e^x\ot I_B)(S_{AB})}{e^x(S_A)}\,,
\end{equation}
where $S_A=(I_A\ot u_B)S_{AB}$.

Given a bipartite state $S_{AB}$ and a set of measurements $\cN_A\subset\me_A$ on $A$ and $\cN_B\subset\me_B$ on $B$ we can construct the joint distribution of outcomes.  These can be written as a vector, which we denote $V(S_{AB},\cN_A,\cN_B)$, whose entries can be conveniently displayed in a matrix.  For instance, if we consider two binary measurements on a bipartite system then we can represent $V(S_{AB},\cN_A,\cN_B)$ as\footnote{Although written as a matrix, we consider this a way to display the elements of the vector.}
\begin{align*}
	\left(
	\begin{array}{cc|cc}
	P(00|00)&P(01|00)&P(00|01)&P(01|01)\\
	P(10|00)&P(11|00)&P(10|01)&P(11|01) \\
	\hline
	P(00|10)&P(01|10)&P(00|11)&P(01|11)\\
	P(10|10)&P(11|10)&P(10|11)&P(11|11) 
	\end{array}
	\right).
\end{align*}

In this work, we impose \emph{local tomography} for the GPTs under consideration, meaning that the joint state of a multi-partite system can be fully characterised by the statistics obtained from local measurements on its component subsystems. This ensures that systems can be completely described in terms of the outcome probabilities of all combinations of product fiducial measurements on the subsystems. (For theories like real quantum mechanics, where this is not the case, additional global parameters are necessary for characterising a state.)  In locally tomographic theories, the two extremal choices of a joint state space of two single-system state spaces are the \textit{minimal} and the \textit{maximal tensor product} respectively. The minimal tensor product, $\otimes_{\rm min}$, of two state spaces is the convex hull of all possible product states, i.e., it comprises all separable states.  For theories where the local effect space is $\eff_{\rm max}(\sta)$, the maximal tensor product of two state spaces, $\otimes_{\rm max}$, is defined as the set of joint states, $S$, for which for all product effects $e$ we have $0\leq e(S)\leq1$, and such that the tensor product of the single system unit effects acting on $S$ gives $1$. Generically these feature entangled states (but classical cases do not). When the local effect space is not $\eff_{\rm max}(\sta)$ a more general definition of $\otimes_{\rm max}$ is needed (see Section~\ref{sec:sd}).  If the state space of a single system is that given in Example~\ref{ex:gbit}, taking the minimal and maximal tensor product leads to general local theory and box-world respectively, both of which were considered in~\cite{Barrett2007}.

We are not aware of work that provides constructions of (explicit) joint state spaces between the minimal and the maximal tensor product, with the exception of quantum mechanics. However, for such a joint state space to be considered reasonable, we require it to be convex, since we could have a preparation procedure where we randomly prepare one of two states and then forget which one it was.  We also require that for any state and set of local measurements, if the local outcome probabilities are permuted, then there is a state that achieves these permuted correlations under the same measurements. This imposes a symmetry on the joint state space.

\section{Correlation self-testing in the  adaptive CHSH game} \label{sec:results}

We consider the \emph{adaptive CHSH game} introduced in~\cite{selftest_PRL}. A referee asks Bob to choose one of four permutations of the CHSH game~\cite{CHSH}. He then asks Alice and Charlie questions, denoted $R_A$ and $R_C$ respectively, for which they have to give answers, labelled $A$ and $C$ respectively, where $R_A$, $R_C$, $A$ and $C$ can take values in $\left\{0,1 \right\}$. The questions are chosen uniformly at random. The three players win the game if Alice's and Charlie's answers win the CHSH-type game Bob chose. 

Bob can express his choice in terms of two bits, which correspond to the winning conditions in the following table. 
\vspace{-0.5cm}
\begin{center} \begin{tabular}{| c | c |} 
		\hline
		Bob's choice $B$ & Winning condition of CHSH-type game\\
		\hline
		$(0,~0)$ & $r_A \cdot r_C \oplus r_C = a \oplus c$ \\
		\hline
		$(0,~1)$ & $r_A \cdot r_C \oplus r_A \oplus r_C = a \oplus c$ \\
		\hline
		$(1,~0)$ & $r_A \cdot r_C \oplus r_A \oplus r_C \oplus 1 = a \oplus c$ \\
		\hline
		$(1,~1)$ & $r_A \cdot r_C  \oplus r_C \oplus 1= a \oplus c$ \\
		\hline
	\end{tabular}
\end{center}

The three players are aware of the rules of the game and may each choose a strategy. Before the game starts Alice may share a resource $S_{AB}$ with Bob, and Charlie may share $S_{B'C}$ with Bob, where the nature of $S_{AB}$ and $S_{B'C}$ depends on the theory we are working in. However, no other resource is shared between the parties (in particular, no tripartite system). Furthermore, after the game commences, no communication is allowed, so, in particular, Alice and Charlie are not aware of Bob's choice when they give their outcomes. The causal structure associated with the game is shown in Figure~\ref{fig:game}.

Alice, Bob and Charlie may have a coordinated strategy, which one might think of as an additional tripartite (classical) cause. However, because their strategy is fixed, we can think of our result applying for all fixed strategies of Alice, Bob and Charlie without the need for this additional shared resource.

The overall winning probability of the game for a strategy that leads to a distribution ${P_{ABCR_AR_C}(a,b,c,r_A,r_C)}=\frac{1}{4} P_{ABC|R_AR_C}(a,b,c | r_A, r_C)$ is 
\begin{align}
p_\mathrm{win}(P)  = \! \! \! \! \! \! \! \! \! \! \sum_{a,b,c,r_A,r_C} \! \! \! \! \! \! P_{ABCR_AR_C}(a,b,c,r_A,r_C) Q(a,b,c,r_A,r_C),
\end{align} 
where $Q=1$ if the winning condition specified in the above table is met and $Q=0$ otherwise.
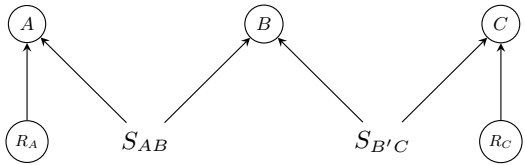
\begin{figure}
	\centering
	\resizebox{0.8\columnwidth}{!}{%
		\begin{tikzpicture} [scale=1.5]
		\node[draw=black,circle,scale=0.75] (1) at (0,1) {$A$};
		\node[draw=black,circle,scale=0.65] (R1) at (0,0) {$R_A$};
		\node (2) at (1,0) {$S_{AB}$};
		\node[draw=black,circle,scale=0.75] (4) at (2,1) {$B$};
		\node (5) at (3,0) {$S_{B'C}$};
		\node[draw=black,circle,scale=0.75] (6) at (4,1) {$C$};
		\node[draw=black,circle,scale=0.65] (R2) at (4,0) {$R_C$};
		\draw [->,>=stealth] (R1)--(1);
		\draw [->,>=stealth] (2)--(1);
		\draw [->,>=stealth] (2)--(4);
		\draw [->,>=stealth] (5)--(4);
		\draw [->,>=stealth] (5)--(6);
		\draw [->,>=stealth] (R2)--(6);
		\end{tikzpicture}
	}
	\caption{Causal structure of the adaptive CHSH game. The answers $A$, $B$ and $C$ as well as the referee's questions $R_A$ and $R_C$ determine whether the game is won. The resources $S_{AB}$ and $S_{B'C}$ are modelled as unobserved resources in the causal structure language, and their nature depends on the theory.}
	\label{fig:game}
\end{figure}

\subsection{Optimal quantum strategy} \label{sec:quantum}
If the players have access to quantum systems, i.e., if Alice can share an entangled quantum system $\rho_{{AB}}$ with Bob, and Charlie can share $\rho_{{B'C}}$ with Bob, then the winning probability for the optimal quantum strategy $P_Q^{\mathrm{opt}}$ is $p_{\mathrm{win}}(P_Q^{\mathrm{opt}})=\frac{1}{2}+\frac{1}{2\sqrt{2}}$. That this is an upper bound follows from Tsirelson's bound~\cite{Cirelson}, and in the following we show that this bound is achievable with quantum states and measurements. 

We express the strategy using the Bell states
\begin{eqnarray*}
\ket{\Psi_{00}}&=&\frac{1}{\sqrt{2}}(\ket{\!\zero\zero}+\ket{\!\one\one})\\
\ket{\Psi_{01}}&=&\frac{1}{\sqrt{2}}(\ket{\!\zero\zero}-\ket{\!\one\one})\\ \ket{\Psi_{10}}&=&\frac{1}{\sqrt{2}}(\ket{\!\zero\one}+\ket{\!\one\zero})\\ \ket{\Psi_{11}}&=&\frac{1}{\sqrt{2}}(\ket{\!\zero\one}-\ket{\!\one\zero})\,.
\end{eqnarray*}
The quantum states shared are $\ket{\Psi_{00}}_{AB}$ and $\ket{\Psi_{00}}_{B'C}$. Bob makes his choice by measuring the system $BB'$ in the Bell basis, i.e., $\{\ket{\Psi_{00}}_{BB'},\ket{\Psi_{01}}_{BB'},\ket{\Psi_{10}}_{BB'},\ket{\Psi_{11}}_{BB'}\}$. He gives outcome $B\in\{(0,0),\,(0,1),\,(1,0),\,(1,1)\}$ using the indices of the state corresponding to the outcome obtained.

Defining $\ket{\theta}=\cos(\frac{\theta}{2})\ket{\!\zero}+\sin(\frac{\theta}{2})\ket{\!\one}$, the measurements generating $A$ and $C$ are as follows:
\begin{itemize}
	\item if $R_A=0$, measure in the $\{\ket{0},\ket{\pi}\}$ basis,
	\item if $R_A=1$, measure in the $\{\ket{\pi/2},\ket{3\pi/2}\}$ basis,
	\item if $R_C=0$, measure in the $\{\ket{\pi/4},\ket{5\pi/4}\}$ basis,
	\item if $R_C=1$, measure in the $\{\ket{3\pi/4},\ket{7\pi/4}\}$ basis.
\end{itemize}
For each measurement, if the first element in the basis is obtained, the outcome is set to $0$ and otherwise it is set to $1$.

The joint probability distribution obtained when $B=(0,0)$ is\bigskip\\
\begin{tabular}{cc|cc|cc|}
	$P_{AC|R_AR_C,B=(0,0)}$&&$R_C=0$&&$R_C=1$&\\
	&&$C=0$&$C=1$&$C=0$&$C=1$\\
	\hline
	\vphantom{$\frac{\sum}{f}$}$R_A=0$&$A=0$&$\frac{1+\eps}{4}$&$\frac{1-\eps}{4}$&$\frac{1-\eps}{4}$&$\frac{1+\eps}{4}$\\
	\vphantom{$\frac{\sum}{f}$}&$A=1$&$\frac{1-\eps}{4}$&$\frac{1+\eps}{4}$&$\frac{1+\eps}{4}$&$\frac{1-\eps}{4}$\\
	\hline
	\vphantom{$\frac{\sum}{f}$}$R_A=1$&$A=0$&$\frac{1+\eps}{4}$&$\frac{1-\eps}{4}$&$\frac{1+\eps}{4}$&$\frac{1-\eps}{4}$\\
	\vphantom{$\frac{\sum}{f}$}&$A=1$&$\frac{1-\eps}{4}$&$\frac{1+\eps}{4}$&$\frac{1-\eps}{4}$&$\frac{1+\eps}{4}$\\
	\hline
\end{tabular}\,,\bigskip\\
where $\eps=1/\sqrt{2}$.  This corresponds to a winning probability of $\frac{1}{2}\left(1+\frac{1}{\sqrt{2}}\right)$. An analogous distribution is obtained for the other values of $B$ and the winning probability is the same for each possible $B$. Note that although Alice and Charlie never know the value of $B$, it is correlated with the quantum state they hold.

\subsection{Theories whose joint state space is given by either the minimal or maximal tensor product}\label{sec:tp}
Taking the minimal or maximal tensor product to form the joint state space is equivalent to taking either the largest state space, or largest effect space compatible with the single system state space of a locally tomographic theory. In this section we show that regardless of the single-system state space of a GPT, whenever the joint state space is given either by the minimal or by the maximal tensor product, then the theory is unable to outperform classical physics in the adaptive CHSH game.

\begin{prop} \label{thm:minmax} 
In any GPT where the joint state space is either defined as the minimal or as the maximal tensor product of the individual state spaces, the optimal winning probability in the adaptive CHSH game is bounded by $3/4$.
\end{prop}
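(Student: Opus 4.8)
The plan is to show that in both cases the tripartite correlation $P_{ABC|R_AR_C}$ produced in the game admits a \emph{local} (common-cause) decomposition of the form $P(a,b,c|r_A,r_C)=\sum_\lambda p(\lambda)\,P_A(a|r_A,\lambda)\,P_B(b|\lambda)\,P_C(c|r_C,\lambda)$ for some shared classical variable $\lambda$, where $P_A$, $P_C$ are genuine conditional distributions. Any correlation of this form wins the game with probability at most $3/4$: fixing $\lambda$, Alice's response depends only on $(r_A,\lambda)$ and Charlie's only on $(r_C,\lambda)$, so for each value of Bob's output $b$ they play one of the four CHSH-type games (with winning function $g_b(r_A,r_C)$ read off the table) using an uncorrelated product strategy, which by the classical CHSH bound wins with probability at most $3/4$; averaging over $b$ with weights $P_B(b|\lambda)$ and then over $\lambda$ preserves the bound. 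Thus it suffices to exhibit such a decomposition. Throughout I use that the shared resource is the product $S_{AB}\otimes S_{B'C}$ across the $AB:B'C$ cut, that $P(a,b,c|r_A,r_C)=(e^a_{r_A}\otimes F^b\otimes g^c_{r_C})(S_{AB}\otimes S_{B'C})$ for local effects, and that this is linear in each of $S_{AB}$ and $S_{B'C}$.

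For the minimal tensor product the decomposition comes from separability of the \emph{states}. Writing $S_{AB}=\sum_i\mu_i\,S_A^i\otimes S_B^i$ and $S_{B'C}=\sum_j\nu_j\,S_{B'}^j\otimes S_C^j$ and expanding by linearity gives $P(a,b,c|r_A,r_C)=\sum_{i,j}\mu_i\nu_j\,e^a_{r_A}(S_A^i)\,F^b(S_B^i\otimes S_{B'}^j)\,g^c_{r_C}(S_C^j)$. Here $\lambda=(i,j)$ is the shared variable: Alice's factor $e^a_{r_A}(S_A^i)$ depends only on $(a,r_A,i)$, Charlie's $g^c_{r_C}(S_C^j)$ only on $(c,r_C,j)$, and $F^b(S_B^i\otimes S_{B'}^j)$ is a non-negative weight furnishing $P_B(b|\lambda)$, so the required product form holds. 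Note that Bob's measurement $F^b$ may be entangled across $B:B'$ here, but this is harmless because it only ever acts on the product state $S_B^i\otimes S_{B'}^j$, producing a scalar weight.

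For the maximal tensor product the states can be entangled, so instead the decomposition comes from separability of the \emph{effects}. By the duality between the two extremal tensor products, the maximal tensor product of state spaces pairs with the minimal tensor product of effect spaces; concretely, every valid effect on the composite $BB'$, whose state space is $\sta_B\otimes_{\mathrm{max}}\sta_{B'}$, is a convex combination of product effects, $F^b=\sum_i t_i^b\,\phi_i^b\otimes\psi_i^b$. Substituting this and using that the global state is a product across $AB:B'C$ factorizes each term into an Alice--Bob part $(e^a_{r_A}\otimes\phi_i^b)(S_{AB})$ and a Charlie--Bob part $(\psi_i^b\otimes g^c_{r_C})(S_{B'C})$. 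Taking $\lambda=(b,i)$ as the shared variable, normalising by the no-signalling marginals $\phi_i^b(S_B)$, $\psi_i^b(S_{B'})$, and using $\sum_b F^b=u_{BB'}$ to get $\sum_\lambda p(\lambda)=1$, again yields the product form, and hence $p_{\mathrm{win}}\le 3/4$.

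The main obstacle is the maximal case, specifically justifying that Bob's joint measurement on $BB'$ is genuinely separable even though $B$ and $B'$ originate from two distinct bipartite resources: one must argue at the level of the composite $BB'$ that the maximal-tensor-product state space admits only effects in the (closed convex hull of) product effects, invoking the polar/bipolar duality together with the no-restriction hypothesis for the local effect spaces. Some care is also needed to check that the four winning conditions of the table are each CHSH-equivalent with classical value exactly $3/4$, and that the factorised expressions assemble into a bona fide normalised local model (handling zero-weight outcomes where $\phi_i^b(S_B)=0$). By contrast, once separability of states is invoked the minimal case is essentially immediate.
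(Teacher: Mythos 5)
Your proof is correct and takes essentially the same route as the paper's: for the minimal tensor product you exploit separability of the shared states, and for the maximal tensor product you exploit the duality fact that every effect on $BB'$ is a combination of product effects, in both cases reducing the tripartite correlation to a classical local model that cannot exceed the CHSH bound of $3/4$. The paper packages exactly these ideas as two lemmas (separable effects on $BB'$ yield separable post-measurement states on $AC$, and separable states cannot win any CHSH game with probability above $3/4$), whereas you write out the resulting hidden-variable decomposition of $P_{ABC|R_AR_C}$ explicitly; the content is the same.
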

The proof of this proposition is given in Appendix~\ref{app:minmax}.  The idea is that in such theories, any strategy can be classically simulated, and the winning probability for any classical strategy is at most $3/4$. Note that previous work~\cite{Short2006,Skrzypczyk2009b,Skrzypczyk2009} implies this for boxworld.

This bound can be achieved in any theory in which there are (at least) two states and a measurement $M$ that perfectly distinguishes them, i.e., gives the outcome $0$ for one state and $1$ for the other with unit probability.  To see this, take the sources $S_{AB}$ and $S_{B'C}$ to share with uniform probability either the first or the second of these perfectly distinguishable states and let each party perform the measurement $M$ on each of the obtained systems. Let Alice and Charlie output their measurement outcome 
 respectively. Then Bob outputs $(0,0)$ or $(1,0)$ each with probability $\frac{1}{2}$ if he obtains the same outcome from both his measurements and he outputs $(0,1)$ or $(1,1)$ each with probability $\frac{1}{2}$ otherwise.

\subsection{Theories with two-dimensional single-system state spaces} \label{sec:polygons}
In this section we make progress on analysing their performance in the adaptive CHSH game for  theories beyond those whose joint state space is given by the minimal or maximal tensor product. To do so we consider GPTs whose local state spaces are regular polygons.  Because we lack the tools to construct different bipartite state spaces for these theories explicitly, we instead generate bounds that apply to any theories with such local state spaces.  It turns out that the cases of regular polygons with an odd number of vertices differ qualitatively from the cases for which this number is even and that self-dualization of the local state spaces leads to yet another behaviour. We consider each of these separately below.

\subsubsection{Regular polygon state spaces with an odd number of extremal vertices}\label{sec:odd_n}
In the cases with an odd number of vertices we argue in the following  that it is impossible to win the adaptive CHSH game with a probability larger than quantum theory because it is impossible to win any CHSH game with probability larger than quantum theory with such a local  state space. For the usual CHSH game, the larger the state space, the larger the winning probability. Hence, to upper bound the winning probability of the CHSH game we can, in this case, consider the joint state space formed with the maximal tensor product.

For $n=5,7,\ldots,29$ we used a linear program to show that no state in the maximal tensor product of two such systems can lead to correlations that exceed Tsirelson's bound for any CHSH game, and hence also in the adaptive CHSH game. In fact, the correlations that we obtained are strictly below Tsirelson's bound in all cases.

To see this, consider the dual cone to the corresponding local state space, $\mathcal{E}_{\ext}$, which has $n$ rays and $2n$ extremal effects (for each effect, $e$, on a ray, $u-e$ is a (non-ray) extremal effect)~\cite{Janotta2011}. Given a bipartite state $S_{AB}$, a set $\cN_A$ of two binary measurements on $A$, and another such set $\cN_B$ on $B$, we can compute the joint distribution of the outcomes and arrange it as a vector, $P=V(S_{AB},\cN_A,\cN_B)$. Since the CHSH value is a linear function of $P$, we can compute this by taking the inner product of this vector with another vector $C_{\chsh}$. Due to linearity, the maximal CHSH value is furthermore obtained by performing extremal measurements on extremal states.

Now, for any choice of extremal measurements $\cN_A$ and $\cN_B$, we can obtain the maximal CHSH value for this choice by the following optimisation:
\begin{align*}
\max_{S_{AB}} \quad &C_{\chsh}.V(S_{AB},\cN_A,\cN_B) \\
\operatorname{subj. to} \quad &(e_A \otimes e_B) (S_{AB}) \geq 0 \ \ \forall e_A, e_B \in \mathcal{E}_\ext\\
& (u_A\ot u_B)(S_{AB}) = 1\,,
\end{align*}
where the last two conditions correspond to restricting $S_{AB}$ to be a valid state.

Comparing the results for all choices of four measurements, we obtain the optimal CHSH values summarised in the following table.

\begin{widetext}
{\small
	\begin{center} \begin{tabular}{| c | c | c |} 
			\hline
			$n$ & optimal CHSH value & formulaic description of value\\
			\hline
			5 & 0.8354
			& $\frac{\vphantom{\sum^A}1}{\vphantom{\frac{1}{f}}2} + \frac{1}{4(1+\sec(\pi/n))^2}(-1 +
			\sec(\pi/n)(2\cos(\frac{3 + n}{4n}\pi) + 6\sin( \frac{1 + n}{4n}\pi) - \sec(\pi/n) + 2))$ $ \  \vphantom{*}$ \\
			\hline
			7 & 0.8462  & $\frac{\vphantom{\sum^A}1}{\vphantom{\frac{1}{f}}2} + \frac{1}{4(1 + \sec(\pi/n))^2} (1 + 
			\sec(\pi/n)(2 \sin( \frac{3 + n}{4n}\pi) + 
			6\cos(\frac{1 + n}{4n}\pi) + \sec(\pi/n) - 2))$ *\\
			\hline
			9 & 0.8497	& $\frac{\vphantom{\sum^A}1}{\vphantom{\frac{1}{f}}2} + \frac{1}{4(1 + \sec(\pi/n))^2} (1 + 
			\sec(\pi/n)(2 \cos( \frac{3 + n}{4n}\pi) + 
			6\sin(\frac{1 + n}{4n}\pi) + \sec(\pi/n) - 2))$ *\\
			\hline
			11 & 0.8505 & $\frac{\vphantom{\sum^A}1}{\vphantom{\frac{1}{f}}2} + \frac{1}{4(1 + \sec(\pi/n))^2} (-1 +
			\sec(\pi/n) (
			2\sin(\frac{3 + n}{4n}\pi) + 6 \cos( \frac{1 + n}{4n}\pi)   - \sec(\pi/n) + 2))$ $ \  \vphantom{*}$ \\
			\hline
	\end{tabular} \end{center} 
}
\end{widetext}

 Although our table involves fixed $n$, we give a formulaic description of how to generate the value that depends on $n$.  The reason for doing so is that our numerical results suggest that these formulae repeat such that the formula depends on the value of $n \! \mod 8$ (a similar repetition was suggested in~\cite{Janotta2011}). In general, formulaic descriptions of this kind can be extracted from the solution to the linear program above by retaining the information with which combinations of effects we have obtained the optimal CHSH value and then looking at the solution of the linear program which yields the corresponding state $S_{AB}$.  Writing $S_{AB}$ as an analytic expression leads to formulae like above (but where $n$ is the particular number of extremal states). We have checked numerically that these formulae are accurate up to $n= 29$.

 Our results relate to those of~\cite{Janotta2011}, where for one particular extremal state in the maximal tensor product state space the maximal CHSH value was computed. For the values $n=7,9,15,17,23,25$, our optimisation recovers the values from~\cite[Table~A2]{Janotta2011} (indicated in the table above with a $*$), while for $n=5,11,13,19,21,27,29$ we obtain larger values than those of~\cite[Table~A2]{Janotta2011}.  Note, however, that all of our optimal values can be recovered with measurements on the particular state that was analysed in~\cite{Janotta2011}, and we hence suspect that the differences in the formulaic description originates from typos in~\cite[Table~A2]{Janotta2011}.

Our results imply that if the systems held by Alice and Charlie have a single-system state space that is a regular polygon with $n=5,7,\ldots,29$ vertices, the maximum winning probability in the adaptive CHSH game is strictly smaller than in quantum theory. We do not exclude, however, that the use of several such systems could beat the quantum bound.  For instance, it could be that the scenario of the adaptive CHSH game can lead to several copies of a particular state shared between Alice and Charlie, and that from such copies it is possible to beat Tsirelson's bound. This would correspond to the ability to distil non-locality from such states.  Unfortunately, non-locality distillation is not well-understood and whether this is possible is an open question. For further details regarding this issue, we refer to Section~\ref{sec:distill}.

We further remark that if we restrict the adaptive CHSH game to be played with one pair of gbits shared by each source, then our results imply that all theories with the local state spaces we have analysed here perform strictly worse than quantum mechanics in this game. We can think of the adaptive CHSH game with a restriction on the dimension of the shared systems as \emph{dimension-dependent self-testing}.

\subsubsection{Regular polygon state spaces with an even number of extremal vertices} \label{sec:even_n}
The ability to construct and analyse explicit theories in which the joint state space is given by a tensor product other than the minimal or maximal tensor products would be especially useful for considering theories where the single-system state space is a regular polygon with an even number of vertices, since in this case it is known that using the maximal tensor product leads to correlations that can be distilled to a PR-box. Understanding to what extent this is also the case for other tensor products is important for upper bounding the performance of these theories in the adaptive CHSH game. Furthermore, an explicit construction of other tensor products would permit us to fully determine their performance.

In spite of the lack of explicit constructions, we can nevertheless derive bounds on the performance of such theories in the adaptive CHSH game. To do so, we analyse the performance of theories with joint state spaces of two gbits that have a CHSH value restricted to some specific value.  There are eight CHSH-type inequalities (related to one another by symmetries)~\cite{Cirelson93}, and hence it makes sense to restrict all of them simultaneously.  We express the quantities involved in these inequalities by $J_{\chsh}^i(P)$, where $i$ runs from 1 to 4 (these quantities are in direct correspondence with the winning conditions in the adaptive CHSH game), and $P$ is a probability distribution with two inputs and two outcomes. We get eight inequalities via lower and upper bounds, for instance we can take $J_{\chsh}^1(P)$ to be
\begin{align*}
  \frac{1}{4}\big(&P(11|01)+P(00|01)+P(11|00)+P(00|00)\\
  &+P(11|11)+P(00|11)+P(10|10)+P(01|10) \big)\,,
\end{align*}
so that two of the CHSH inequalities can be written $1/4\leq J_{\chsh}^1(P)\leq3/4$.

Restricting the CHSH value of a theory to satisfy the classical bound hence corresponds to saying that for all bipartite states $S_{AB}$ of two gbits in the theory, and pairs of measurements $\cN_A=(M_1,M_2)$ on Alice and $\cN_B=(N_1,N_2)$ on Bob (here $M_i\in\me_A$ and $N_i\in\me_B$), the distribution $P=V(S_{AB},\cN_A,\cN_B)$ obeys 
$1/4\leq J_{\chsh}^i(P)\leq3/4$ for $i=1,2,3,4$.

The following theorem gives a bound for the adaptive CHSH game for $n=4$, i.e., for systems where the single-system state space is $\sta_{\Box}$.
\begin{thm} \label{thm:gbits}
Let $\epsilon\in[0,1/8]$ and consider a GPT with single-system state space $\sta_{\Box}$ and any  joint state space~\footnote{Recall form \ref{sec:GPT} that a joint state space has to be convex and symmetric under permutations.}  where the CHSH values are bounded by ${2}{\epsilon} \leq J_{\chsh}^i(P) \leq 1-{2}{\epsilon}$ for $i=1,2,3,4$. Let the sources $S_{AB}$ and $S_{B'C}$ each correspond to one pair of gbits. Then the winning probability in the adaptive CHSH game is bounded by
\begin{equation} \label{eq:gbitbound}
p_{\mathrm{win}} \leq \frac{3}{4} + \frac{\epsilon  (1-8 \epsilon)}{96 \epsilon ^2-12 \epsilon +1} .
\end{equation}
\end{thm}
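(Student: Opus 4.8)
The plan is to condition on Bob's outcome and reduce the tripartite winning probability to a convex combination of conditional CHSH values. First I would write $p_{\mathrm{win}} = \sum_{B}\Pr[B]\, V_B\bigl(P_{AC|B}\bigr)$, where $P_{AC|B}=V(S_{AC|B},\cN_A,\cN_C)$ is the Alice--Charlie distribution conditioned on Bob's answer $B$ and $V_B$ is the winning probability of the CHSH-type game selected by $B$. Each $V_B$ is one of the functionals $J_{\chsh}^i$ (or its complement $1-J_{\chsh}^i$), so the objective is linear in the four conditional distributions. The conditional state $S_{AC|B}=(I_A\ot e^B\ot I_C)(S_{AB}\ot S_{B'C})/\Pr[B]$, obtained from Bob's effect $e^B$ on $BB'$, is again a state of two gbits, so the hypothesis gives $2\epsilon\le J_{\chsh}^i(P_{AC|B})\le 1-2\epsilon$ for every $i$. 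Summing Bob's effects to the unit effect and using $u_{BB'}=u_B\ot u_{B'}$ shows that the probability-weighted average $\sum_B\Pr[B]\,S_{AC|B}=S_A\ot S_C$ is a product state, so $\sum_B\Pr[B]\,P_{AC|B}$ is a product distribution whose CHSH values lie in the classical range $[1/4,3/4]$.

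The subtle point is that these two facts --- per-outcome CHSH boundedness and a product average --- are by themselves too weak: one can write down four correlation tables satisfying both constraints that already attain $p_{\mathrm{win}}=1-2\epsilon$, which exceeds the claimed bound for small $\epsilon$. The missing ingredient, and the heart of the proof, is that the four conditional states are not independent: they are all post-measurement states of one fixed pair of resources under a single four-outcome measurement on $BB'$. Writing each effect as $e^B=\sum_k g^B_k\ot h^B_k$ one gets $\Pr[B]\,P_{AC|B}(ac|xy)=\sum_k (m^x_a\ot g^B_k)(S_{AB})\,(h^B_k\ot n^y_c)(S_{B'C})$, so the Alice--Charlie correlations factor through Bob's local effects on the two bounded-CHSH resources $S_{AB}$ and $S_{B'C}$. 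I would exploit this factorisation, together with the CHSH bound imposed on $S_{AB}$ and $S_{B'C}$ themselves, to constrain how strongly Bob's outcome can be correlated with the direction in which he steers Alice and Charlie --- this is precisely the entanglement-swapping bottleneck that forbids distilling the conditional correlations up to $1-2\epsilon$.

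With the steering constraint in hand I would invoke convexity to restrict attention to extremal states and extremal effect-measurements, reducing the maximisation of $p_{\mathrm{win}}$ over all admissible $S_{AB}$, $S_{B'C}$ and all four-outcome measurements on $BB'$ to a finite bilinear/linear program in the parametrisation of two-gbit states of $\sta_{\Box}$ satisfying $|J_{\chsh}^i-1/2|\le 1/2-2\epsilon$. Solving this program --- either by exhibiting an explicit optimal strategy together with a matching dual certificate, or by optimising the single mixing weight that governs how often Bob steers into a high-CHSH state versus a compensating one --- should yield the closed form $\tfrac34+\epsilon(1-8\epsilon)/(96\epsilon^2-12\epsilon+1)$. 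The quadratic denominator is the signature of the stationarity condition of this optimisation, and the endpoints $\epsilon=1/8$ (classical cap, bound $3/4$) and $\epsilon=0$ (box-world cap, bound again $3/4$) provide consistency checks against Proposition~\ref{thm:minmax}.

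The main obstacle is the middle step: translating the abstract statement that Bob measures $BB'$ into quantitative constraints on the four conditional correlation tables that go strictly beyond the product-average condition, and doing so uniformly over all joint state spaces lying between the minimal and maximal tensor product. Once these steering constraints are correctly pinned down, extracting the exact rational bound is a finite (if tedious) optimisation; obtaining the constraints themselves both correctly and tightly enough to reproduce the stated bound is where the real work lies.
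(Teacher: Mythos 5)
Your skeleton --- conditioning on Bob's outcome, the observation that per-outcome CHSH bounds plus a product average are too weak, reduction to extremal states and effects, and a finite symbolic optimisation with the endpoint checks at $\epsilon=0$ and $\epsilon=1/8$ --- matches the paper's strategy. But the proposal stops exactly where the proof has to start, and you say so yourself: you never identify what actually constrains Bob's joint effects. Your factorisation $e^B=\sum_k g^B_k\ot h^B_k$ cannot supply this constraint. For the local state space $\sta_{\Box}$, product functionals span the entire dual space of the joint system, so if the $g^B_k,h^B_k$ are allowed to be arbitrary linear functionals then \emph{every} linear map admits such a decomposition and the identity carries no information; if instead you require them to be genuine local effects combined with non-negative weights, you have silently restricted Bob to separable effects, which is unjustified (theories strictly between the minimal and maximal tensor product generically contain entangled effects) and would in any case trivialise the problem, since by Lemma~\ref{lem:entangled_effects} separable effects yield separable swapped states and hence the classical bound $3/4$, not the stated formula.

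The paper's essential mechanism, which is absent from your plan, is a duality argument that uses the hypothesis you never invoke: convexity and permutation symmetry of the joint state space (the footnote of the theorem). By a depolarisation procedure over local relabellings, any such state space reaching the CH values $\pm(1/2-4\epsilon)$ must contain the eight isotropic states at noise level $\epsilon$. Since every effect of the theory must be non-negative on every state of the theory, every admissible effect lies in the dual cone $\mathcal{D}_\epsilon$ of the convex hull of the $16$ deterministic states and these $8$ isotropic states; for $\epsilon>0$ this cone is strictly larger than the box-world effect cone (so partial non-locality swapping is possible) yet strictly smaller than the set of all functionals consistent on product states (so the swapping is bounded). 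This $\mathcal{D}_\epsilon$, together with the CH-constrained state polytope $\mathcal{C}_\epsilon$, is precisely the quantitative ``steering constraint'' you are looking for. The paper then uses the relaxation $p_{\mathrm{win}}\leq J^i_{\chsh}(P_{AC})$ for a single optimal extremal effect of $\mathcal{D}_\epsilon$ applied to extremal states of $\mathcal{C}_\epsilon$ (via Lemma~\ref{lem:postmeas}), sidestepping your harder measurement-completion constraint $\sum_B e^B=u_{BB'}$, and a symbolic enumeration of the $80$ extremal states and $80$ extremal rays gives $\frac{2\epsilon(8\epsilon-1)}{96\epsilon^2-12\epsilon+1}\leq J^i_{\ch}(P)\leq\frac{80\epsilon^2-10\epsilon+1}{96\epsilon^2-12\epsilon+1}$, from which the bound follows by $J^i_{\chsh}=\frac{3}{4}-\frac{1}{2}J^i_{\ch}$. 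Without the isotropic-state duality step your final optimisation runs over an unspecified (or wrongly specified) feasible set, so the gap you flag as ``where the real work lies'' is not a technicality to be filled in later --- it is the theorem's central idea.
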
 

\begin{proof}
  The restrictions on the CHSH values are equivalent to the following restrictions on the correlations in the bipartite state space:\footnote{Due to the redundancy in the description of states there are several equivalent formulations of these $8$ constraints. Note that for $\epsilon=\frac{1}{8}$ we recover the state space of the minimal tensor product, and for $\epsilon=0$ that of the maximal tensor product.}  {\small\begin{align}
4\epsilon \! - \! \frac{1}{2} & \leq P(10|01)+P(11|10)+P(01|11) \! - \! P(11|00)  \leq \frac{3}{2}  \! - \! 4\epsilon \nonumber \\
4\epsilon  \! - \!  \frac{1}{2} & \leq P(11|01)+P(01|10)+P(10|11)  \! - \! P(11|00)  \leq \frac{3}{2} \! - \! 4\epsilon \nonumber\\
4\epsilon  \! - \! \frac{1}{2} & \leq P(10|01)+P(01|10)+P(11|11)  \! - \! P(11|00)  \leq \frac{3}{2} \! - \! 4\epsilon  \nonumber\\
4\epsilon \! - \! \frac{1}{2} & \leq P(11|01)+P(11|10)+P(00|11) \! - \! P(11|00)  \leq \frac{3}{2} \! - \! 4\epsilon  \label{eq:ch}
\end{align}}

\noindent The relations~\eqref{eq:ch} are often called the CH inequalities~\cite{CH}, and it is convenient to work with these instead (note that the CH and CHSH inequalities can be seen as rewritings of the same inequality~\cite{Mermin95,Cereceda}, in particular, for each CHSH quantity there is a CH quantity, where the values are related by $J_{\chsh}^i(P)=\frac{3}{4}-\frac{1}{2}J_{\ch}^i(P)$|see e.g.~\cite{CC} for an explanation of the different ways to express the same Bell inequality).

Because the local state space is $\sta_{\Box}$, there are two extremal fiducial measurements on one gbit, ${M}_0$ and ${M}_1$, giving rise to the four combinations of these as the extremal fiducial measurements on two gbits. Hence, it is sufficient to restrict the CH values for these measurements, and to express every state in terms of the probabilities from these extremal fiducial measurements as a vector $V\left(S_{AB},\left({M}_0,{M}_1\right), \left({M}_0,{M}_1\right)\right)$.  Due to non-signalling and normalisation, each such vector is characterised by $8$ parameters. In this representation, the state space of a theory with restricted CHSH value is a convex subset of the polyhedron restricted by the non-signalling constraints, the positivity conditions, normalisation and the requirement that the relations~\eqref{eq:ch} hold. This polyhedron is the maximal possible state space and can be equivalently described in terms of its $80$ extremal vertices, which are found by identifying the vectors that simultaneously saturate $8$ of the facet-defining inequalities while obeying the remaining inequalities. We use $\mathcal{C}_\epsilon$ to denote the state space thus formed for a given $\epsilon\in[0,1/8]$. The conversion to the vertex picture was performed symbolically, i.e., with vertices expressed in terms of $\epsilon$, in {\sc Mathematica}. \bigskip

To approximate the effect space, we start by noting that a convex state space achieving the CH-value of $-\frac{1}{2}+4\epsilon$ (and $\frac{3}{2}-4\epsilon$ respectively) necessarily contains the state corresponding to the isotropic vector
\begin{align*}
  \left(
  \begin{array}{cc|cc}
    \epsilon & \frac{1}{2}-\epsilon & \epsilon & \frac{1}{2}-\epsilon \\
    \frac{1}{2}-\epsilon & \epsilon & \vphantom{\frac{1}{f}}\frac{1}{2}-\epsilon & \epsilon \\
    \hline
    \epsilon & \vphantom{\sum^5}\frac{1}{2}-\epsilon & \frac{1}{2}-\epsilon & \epsilon \\
    \frac{1}{2}-\epsilon & \epsilon & \epsilon & \frac{1}{2}-\epsilon \\
  \end{array}
  \right)
\end{align*}
(and $7$ others equivalent up to relabelling), where the entries represent the outcome probabilities for the fiducial measurements.  The reason for this is that using shared randomness and the symmetry of the state space we can form such a state by a depolarisation procedure (see e.g.,~\cite[Appendix~A]{Masanes2006}). 
The effect space of any theory with such a CH-value restriction is thus contained in the dual to the convex hull of the $16$ local deterministic vertices (the tensor products of the extremal states of two systems with local state space $\sta_\Box$) and the $8$ isotropic vectors with $\epsilon$ matching that of the state space under consideration. We use $\mathcal{D}_\epsilon$ to denote the corresponding convex cone of non-normalised effects, with origin at ${\bf{0}} \in \mathbb{R}^{16}$. It has $80$ extremal rays (plus $14$ that arise due to the redundancy in the description of the states, which, if normalisation is dropped, are characterised in terms of $9=16-7$  parameters). This cone was found by imposing that an effect, when applied to any of the above $24$ states, must yield a non-negative probability. The extremal rays of the effect cone $\mathcal{D}_\epsilon$ are then those that satisfy all of these $24$ inequalities and that give zero-probability on $8$ of the $24$, leaving only one parameter free. This was computed symbolically for any $\epsilon$ in {\sc Mathematica}.\footnote{When considering specific values of $\epsilon$, computational tools for vertex enumeration are available that allow for the computation of the dual cone to the non-normalised states in a much more efficient way.}

Hence, by construction, $\mathcal{C}_\epsilon$ and $\mathcal{D}_\epsilon$ are outer approximations to the true state and effect spaces of any theory with local state space $\sta_{\Box}$ in which two gbits have such a CHSH restriction.

Given a joint state space and a set of possible measurements, the maximum winning probability in any CHSH game is achieved with extremal local measurements\footnote{For an extremal effect $e$, the complementary effect $u-e$ is also extremal, making up an extremal measurement $\left\{e, u-e \right\}$ (see for instance~\cite{Kobayashi2017} for a proof).} acting on an extremal state. To assess the winning probability in the adaptive CHSH game we can think about the set of joint states that could form $S_{AC}$ under any effect applied on $BB'$. Due to the linearity of the winning condition, an upper bound on the winning probability of the adaptive CHSH game can be found by considering the extremal states in this set and the maximum CHSH value these can yield. (This is an upper bound because we only consider a single optimal effect applied on $BB'$, rather than a full measurement and it could be that there is no measurement whose effects all give rise to this winning probability.)  The set of extremal $S_{AC}$ can be formed by taking $S_{AB}$ and $S_{B'C}$ to be extremal states and considering extremal joint effects on $BB'$ (see Lemma~\ref{lem:postmeas} in Appendix~\ref{app:2}).

Using the above method and taking $\mathcal{C}_\epsilon$ for the extremal states and the cone $\mathcal{D}_\epsilon$ for the extremal effects, we can bound the winning probability in the adaptive CHSH game. These strategies can be enumerated by checking the winning probability for all states $S_{AC}$ obtained by sharing any two extremal states of $\mathcal{C}_\epsilon$ at the two sources and applying an effect on each of the extremal rays of $\mathcal{D}_\epsilon$ to $BB'$. Since we are considering a cone of non-normalised effects, the convex hull of all post-measurement states $S_{AC}$ that are obtained in this way, together with ${\bf{0}}$ form another convex cone. In order to derive bounds on the winning probability, only the (sub-)normalised states have to be considered, so we can truncate this cone at the vectors whose probabilities for each measurement sum to $1$. This is done by normalising all extremal post-measurement states, which span the truncated cone of all sub-normalised states. We note that, despite the incompatibility of $\mathcal{C}_\epsilon$ and $\mathcal{D}_\epsilon$, all of these strategies lead to valid probabilities. [One way to see this is to consider first applying the measurements on $A$ and $C$. This leads to a product state on the system $BB'$ (that depend on the outcomes). By construction, the effect on $BB'$ yields a valid probability when applied to a product state.]

In the above considerations, any separable extremal states and effects can be ignored and symmetries can be removed to reduce their number. Having done this, the number of remaining combinations is sufficiently small that they can all be computed symbolically in {\sc Mathematica}. This leads to the following bounds for the CH-value of the states $S_{AC}$ (for any $i\in\{1,2,3,4\}$):
\begin{equation}
\frac{2\epsilon(8\epsilon-1)}{96\epsilon^2-12\epsilon+1}\leq J^i_{\ch}(P)\leq \frac{80\epsilon^2-10\epsilon+1}{96\epsilon^2-12\epsilon+1}\,,
\end{equation}  
where $P$ is the distribution generated by $S_{AC}$ using the extremal fiducial measurements.

These bounds can be converted into bounds on the winning probability in the adaptive CHSH game as follows:
\begin{align*}
p_{\mathrm{win}}&\leq J^i_{\chsh}(P) \\
&=\frac{3}{4}-\frac{1}{2}J^i_{\ch}(P) \\
&\leq\frac{3}{4}+\frac{\epsilon(1-8\epsilon)}{96\epsilon^2-12\epsilon+1} \,.
\end{align*}
Similarly, we can lower bound $p_{\mathrm{win}}$ by
\begin{align}
p_{\mathrm{win}}&\geq\frac{3}{4}-\frac{80\epsilon ^2-10\epsilon+1}{96\epsilon^2-12\epsilon+1}.
\end{align}
\end{proof}

The bound~\eqref{eq:gbitbound} is maximal for $\epsilon=\frac{1}{16}$, which yields $p_{\mathrm{win}} \leq \frac{4}{5}$. Hence, no theory with single-system state space $\sta_{\Box}$ can recover the quantum performance in the adaptive CHSH game if the shared resources each comprise a single pair of gbits. Note that~\eqref{eq:gbitbound} may not be tight. In fact, when attempting to explicitly construct joint state spaces we were unable to beat the classical winning probability $\frac{3}{4}$.  In more detail, we considered firstly taking the joint space to be $\mathcal{C}_\epsilon$, i.e., where the extremal states are the $16$ local deterministic states, and noisy versions of the $8$ permutations of the PR-box with the joint effect space being the dual to this, and secondly a restriction of the state space of box-world in which we only allow all bipartite states that can achieve at most some fixed CHSH value, i.e., the state space is the dual to $\mathcal{D}_\epsilon$ and the effect space is $\mathcal{D}_\epsilon$. It is conceivable that the bound of $\frac{3}{4}$ also holds for any joint state space in between, whenever a valid corresponding effect space is considered.

The restriction of the sources to share one pair of gbits is imposed here to exclude the possibility of distilling stronger correlations after Bob performs his joint measurement. With the (slightly incompatible) approximations for state and effect spaces used in the proof, distillation of correlations would be possible (note that our outer approximation to the state space includes states that produce correlations that are distillable according to~\cite{Brunner2009}). Given this restriction on the dimension, we here solve the problem of dimension-dependent self-testing for $n=4$ in the adaptive CHSH game.

 However, it is not clear whether this restriction is necessary for the analysis. Partly, this is because, as mentioned above, the theories we could explicitly construct only led to a classical CHSH value. 
Furthermore, Bob can only output two bits $B$, which means that, even if distillation of correlations is in principle possible in a theory, we would need to know whether Alice and Charlie can distil sufficiently in this scenario in which they cannot communicate or share randomness, except via the shared resources that are established through Bob's operation.  In addition, we have derived our bounds by using only one effect for Bob. It may be the case that this effect cannot be completed to a four-outcome measurement, for which each effect allows us to achieve this bound, which is another reason why the true bound could be lower. Nonetheless, conclusively generalising our results to higher dimensional systems where distillation is possible requires further theory to be developed, as we explain in Section~\ref{sec:distill}.\bigskip

For a single system state space with an even number of extremal vertices greater than $4$ we encounter similar obstacles. It is known from~\cite{Janotta2011} that PR-box correlations can be distilled from the analogue of the maximally entangled state, $S_{\max}$, in any such theory with the maximal tensor product.  In these cases the set of correlations is strictly more non-local when we have access to several systems. This may not be the case for other tensor products though.  To illustrate this, consider the case where the local state space is a regular hexagon (i.e., $n=6$). We take a convex combination of $S_{\max}$ and the completely mixed state of the theory, $S_{\mathrm{mix}}$, to form $\epsilon S_{\max}+ (1-\epsilon)S_{\mathrm{mix}}$, and analyse the correlations that result when using the same local measurements as performed in~\cite{Janotta2011}.  Then~\cite{Forster2009} implies that for $\epsilon \geq 2/3$ the non-locality of these correlations, as measured by the maximal winning probability in the CHSH game, is bounded by those of the isotropic correlations with parameter $\epsilon'=1-4\frac{1-\epsilon}{2-\epsilon}$. For these correlations, the winning probability in the CHSH game is bounded by $\frac{1}{2}+\frac{\epsilon'+1}{4}$ (since they cannot be distilled with wirings~\cite{Beigi2015}). This indicates that the distillation protocols known for the maximal tensor product do not perform as well in other cases. Whether maximally non-local correlations can in these cases be distilled in other ways (e.g., from other correlations or in scenarios where the parties measure all subsystems they hold jointly) remains an open question.

\subsubsection{Self-dualized regular polygon state spaces with an even number of extremal vertices}\label{sec:sd}

The situation is considerably different if we \emph{self-dualize} these polygon systems~\cite{NoR2}. A self-dual theory has a (non-normalised) state and effect space that are formally represented by the same geometric object. Starting from any GPT, such as our regular polygon systems with an even number of vertices, we obtain a self-dual theory using  the following procedure that was introduced in~\cite{NoR2}. We first apply a scale transformation to the state space (along with the dual transformation for the effect space) such that the
state space is contained in the effect space after the transformation. This acts only as a change of coordinates, hence has no physical consequences. The effect space is subsequently reduced to coincide with the state space, in violation of the no-restriction hypothesis.

In this case, since the no-restriction hypothesis is not obeyed, we need to take a more general definition of the maximal tensor product into account~\cite{NoR2}, namely for local state spaces $\sta_A$, $\sta_B$ and local effect spaces $\eff_A$, $\eff_B$, the state space consists of all normalised bipartite states in 
$ (\eff_A \otimes_{\rm min} \eff_{\rm max}(\sta_B))^* \cap ( \eff_{\rm max}(\sta_A) \otimes_{\rm min} \eff_B)^*$, where $^*$ denotes the dual cone. This ensures that all effects act consistently on these states and that any post-measurement states are valid.  By using this joint state space, we find numerically, with a linear program analogous to the one used for odd polygon systems above, that Tsirelson's bound cannot be exceeded for $n=4,6,\ldots, 30$ with any state of two gbits in the maximal tensor product. We further find formulaic descriptions of the optimal behaviour that we conjecture to extend to higher even $n$, which we summarise in the following table. 
{\small
	\begin{center} \begin{tabular}{| c | c | c |} 
			\hline
			$n$ & CHSH-value & formulaic description\\
			\hline
			4 & 0.75 & $\frac{\vphantom{\sum^A}1}{\vphantom{\frac{1}{f}}2} +  \frac{\sqrt{2}}{4} \cos( \frac{\pi}{n})$  \\
			\hline
			6 & 0.8125 & $\frac{\vphantom{\sum^A}1}{\vphantom{\frac{1}{f}}2}  \! +  \! \frac{1}{8}[ 2 \cos(\frac{2+n}{4n}\pi)  \! -  \! \cos(\frac{3n-2}{4n}\pi)  \! +  \! \sin(\frac{6+n}{4n}\pi)]$ \\
			\hline
			8 & 0.8536 & $\frac{\vphantom{\sum^A}1}{\vphantom{\frac{1}{f}}2} + \frac{\sqrt{2}}{4}$ \\
			\hline
			10 & 0.8420 &  $\frac{\vphantom{\sum^A}1}{\vphantom{\frac{1}{f}}2} + \frac{3}{8} \sin(\frac{2+n}{4n}\pi) -\frac{1}{8}\cos(\frac{3n-6}{4n}\pi)$ \\
			\hline
	\end{tabular} \end{center} 
}

The upper bounds for the performance in the adaptive CHSH game were obtained by optimising the CHSH value over all bipartite states in the maximal tensor product of two such polygon systems. Performing this, we observe a similar periodicity with period $8$ as in the odd case above, which allowed us to derive similar formulaic descriptions of the optimal CHSH value. The table displays these by giving one particular $n$-value for each formula as well as the corresponding numeric CHSH value. We have certified the formulae to be accurate by running our optimisations up to $n=30$ and we conjecture their validity for any even $n$.

Taking their derivative, it is easy to see that each of these formulae is increasing in $n$ and converges to Tsirelson's bound in the limit $n \rightarrow \infty$. We also remark that for $n=8,16,24, \ldots$ we exactly recover Tsirelson's bound. This does not necessarily mean that there are theories with such local state spaces that recover the quantum performance in the adaptive CHSH game though: we know that with the maximal tensor product (which the bounds are derived from) this is definitely not possible and we do not know whether such correlations can even be achieved with smaller joint state spaces, let alone after performing a joint measurement like in the adaptive CHSH game. Thus we do not expect our upper bounds to be tight, i.e., the actual optimal performance in the adaptive CHSH game is likely worse.

Overall, similarly to the case of odd polygon systems, we have evidence that (subject to the restriction to one pair of gbits) self-dualized theories whose local state space is an even polygon (with any joint state space between minimal and maximal tensor product) cannot outperform quantum mechanics in the adaptive CHSH game. Since the odd polygon systems above were already self-dual, we can now make this statement for \emph{any} such self-dualized theory, independent of whether the number of extremal vertices is even or odd. We illustrate this in Figure~\ref{fig:plot}. Solving the dimension-independent case will require a better understanding of distillation (see Section~\ref{sec:distill}).

 \begin{figure}
 	\includegraphics[width=0.95\columnwidth]{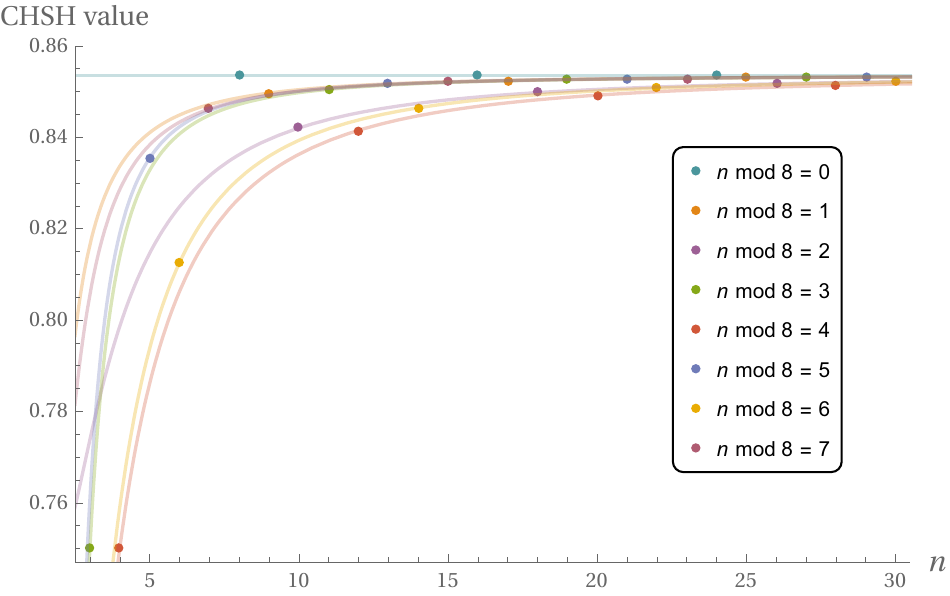}
 	\caption{Maximal CHSH value achievable in the maximal tensor product of self-dualized polygon systems. Along the horizontal axis we display the number of extremal vertices $n$, while on the vertical axis we display the respective optimal CHSH value. The points are the values obtained in our optimizations while the curves depict the respective formulae. The colours are used to group points that follow the same curve.}
 	\label{fig:plot}
 \end{figure}

Similar properties may hold for other local state spaces. For instance, in Section~6 of~\cite{Janotta2011}, it was shown that non-quantum correlations can be generated by local measurements on the analogue of the maximally entangled state when the local state space is `house-shaped', which is self-dual.\footnote{This state space is fairly unnatural, since it is not closed under permutations (see Section~\ref{sec:bipartite}).}  When the joint state space is the maximal tensor product, the authors of~\cite{Janotta2011} were unable to find any correlations violating Tsirelson's bound. Whether similar results hold for other theories with self-dual local state spaces and whether distillation of stronger correlations may be possible in such theories remains open.

\subsubsection{Relation to non-locality distillation and corresponding limitations}\label{sec:distill}
Most of the results in Section~\ref{sec:polygons} are valid under a dimension restriction on the shared systems or up to the distillation of non-locality. In order to solve the problem without the dimension restriction, we would need to better understand the set of correlations that can be obtained by Alice and Charlie, from the post-measurement states Bob can generate for them.

However, distillation of non-locality is not well-understood. While for distillation with local wirings (meaning performing local measurements on each subsystem of a bipartite system and then applying classical operations to the inputs and outputs) limits can be given~\cite{Forster2009, Beigi}, we are not aware of results on distillation with other (joint) measurements on different GPT systems, for instance. To solve our problem in general would require the theory of distillation to be developed.

\subsubsection{Relation to work on correlated boxes}

In recent years, a lot of work has been dedicated to the study of bipartite non-signalling correlations, without reference to any underlying theory. Such correlations can be modelled as behaviours of  black boxes where the two parties can make inputs and obtain (correlated) outputs.  In general, various underlying theories can lead to the same sets of such bipartite correlations (see for instance the even polygon systems with the maximal tensor product), even though the theories may look very different and may behave differently in other setups. The operations that can be performed locally by each party on their part of such a correlated box are restricted to choosing inputs (and processing the outputs obtained from the box).

In the case of the local state space $\sta_\Box$ we encounter the special situation that the correlated boxes that can be obtained by performing local measurements on a pair of gbits are in one-to-one correspondence with their joint states, since the extremal local effects are the unit vectors. Thus Theorem~\ref{thm:gbits} is directly related to research on correlated non-signalling boxes. In fact, for the local state space $\sta_\Box$, the set of all bipartite nonsignaling boxes with two inputs and two outputs is the maximal tensor product of two gbits.\footnote{For other local state spaces the maximal tensor product is different and may even lead to joint state spaces with restricted non-locality~\cite{Janotta2011}. See also the odd polygon systems considered here.}

In~\cite{Short2006,Skrzypczyk2009b,Skrzypczyk2009}, entanglement swapping of such boxes was analysed in terms of \emph{couplers}. Given two copies of a bipartite state, $S_{AB}$ and $S_{B'C}$, these are maps that act on $BB'$ and result in a copy of the same state shared on $AC$ (in some cases the copy is permitted to be noisy~\cite{Skrzypczyk2009}). Such couplers were shown not to exist if the bipartite state space comprises all non-signalling boxes~\cite{Short2006}.  However, they can be consistently defined if the bipartite state space is restricted to an asymmetric region of the non-signalling polytope (that is not closed under relabelling)~\cite{Skrzypczyk2009b, Skrzypczyk2009}.

Theorem~\ref{thm:gbits} contributes to this line of research in that it gives bounds on how non-local the state on $AC$ can be after non-locality swapping by applying linear maps (joint effects) where $S_{AB}$ and $S_{B'C}$ need not be the same state. Furthermore, our analysis is performed for theories with any amount of non-locality.  We note, however, that our consistency conditions (in particular, the symmetry of the state space under local permutations introduced at the end of Section~\ref{sec:bipartite}) differ from those of~\cite{Skrzypczyk2009, Skrzypczyk2009b}.

We also remark that analysing GPTs rather than correlated boxes has another advantage. Namely, when considering couplers that act consistently on some non-signalling boxes, but not on others, it is not so clear how to generalise this into a consistent theory of states and measurements, while a local state space and a composition operation, as considered in the GPT framework, immediately lead to sets of achievable correlated boxes. This is also especially important when aiming to understand various setups with different numbers of parties, where having an underlying theory is important for ensuring a consistent analysis.

\section{Conclusions} \label{sec:conclusion}
We have shown that in the adaptive CHSH game quantum mechanics outperforms the two-dimensional systems of all the theories that we were able to explicitly consider. Nonetheless, due to the need for further research on the distillation of correlations, we are not yet at the stage where this game can be fully analysed. In particular, it is desirable to generalise our results to allow the sharing of more general systems rather than gbit pairs and to take the possibility of distillation into account in the analysis.

Our optimal quantum strategy from Section~\ref{sec:quantum} is achievable by letting all states and measurements have real amplitudes and could thus also be achieved with quantum bits over a real Hilbert space. This is particularly interesting, since the corresponding local state space can be thought of as the $n \rightarrow \infty$ limit of a polygonal state space~\cite{Janotta2011}. Further research in this direction may lead us to identify a reason why the local state space of the smallest quantum system has to be spherical.

In addition, it is known that, while for any theory with a local state space coinciding with that of quantum theory the correlations for bipartite systems coincide with those of quantum mechanics~\cite{Barnum2010}, the same is not true for more than two parties~\cite{Acin2010}. Hence, some theories that only differ from quantum theory in their multi-partite correlations can only be self-tested in a setting that involves such multi-partite resources. Thus, in order to achieve complete correlation self-testing of quantum theory, adding a multi-partite task to the adaptive CHSH game might become necessary. In order for this to happen, one would first need to show that it is possible to formulate fully-fledged theories that lead to such sets of correlations and that recover the performance of quantum theory in the adaptive CHSH game.

It may turn out that there are numerous tasks for which quantum correlations are optimal. Identifying these and their general features is left as an interesting open question, the solution of which may lead to fundamental insights about quantum theory itself.

\bigskip

\acknowledgements This work was supported by an Engineering and Physical Sciences Research Council (EPSRC) First grant (grant number EP/P016588/1) and by the Austrian Science fund (FWF) stand-alone project P~30947.



%


\onecolumngrid
\appendix

\section{Proof of Proposition~\ref{thm:minmax}} \label{app:minmax}

For the proof of Proposition~\ref{thm:minmax}, we rely on the following lemma. 

\begin{lem}\label{lem:essential_entanglement}
In any GPT, the maximum probability of winning a CHSH game with a separable state is upper bounded by $3/4$.
\end{lem}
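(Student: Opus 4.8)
The plan is to reduce the statement to the classical CHSH bound by showing that a separable state can only generate \emph{local} correlations, which are well known to satisfy the CHSH inequality. First I would recall that the winning probability of a CHSH game is a linear functional of the joint outcome distribution $P=V(S_{AB},\cN_A,\cN_B)$, and that this distribution is itself affine in the state $S_{AB}$. Hence it suffices to bound the winning probability over the extreme points of the separable state space, namely the product states $S_A\ot S_B$; equivalently, and more directly, one can exploit the convex-combination structure of an arbitrary separable state.

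Concretely, take any separable state $S_{AB}=\sum_i p_i\,S_A^i\ot S_B^i$ with $p_i\geq 0$ and $\sum_i p_i=1$. In the CHSH scenario Alice applies a measurement $\{e^a_x\}_a$ (with $x\in\{0,1\}$ labelling her input) and Bob a measurement $\{f^b_y\}_b$, so that the joint effect realising the outcome pair $(a,b)$ for inputs $(x,y)$ is the product effect $e^a_x\ot f^b_y$. Using the factorisation $(e^a_x\ot f^b_y)(S_A^i\ot S_B^i)=e^a_x(S_A^i)\,f^b_y(S_B^i)$ from Section~\ref{sec:bipartite}, the resulting distribution is
\begin{equation*}
P(a,b|x,y)=\sum_i p_i\,e^a_x(S_A^i)\,f^b_y(S_B^i)\,,
\end{equation*}
which is exactly a local hidden-variable distribution: the index $i$ plays the role of shared randomness, and conditioned on it Alice's and Bob's response probabilities $e^a_x(S_A^i)$ and $f^b_y(S_B^i)$ are independent and depend only on the respective local input.

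The final step is to invoke the standard Bell/CHSH inequality for local distributions: any $P$ of the above form satisfies the classical bound, so the CHSH winning probability is at most $3/4$. Since this holds for every separable $S_{AB}$ and every choice of local measurements $\cN_A,\cN_B$, the claimed bound follows. I do not anticipate a genuine obstacle here; the only point needing care is to confirm that in the bipartite GPT setting the measurements entering a CHSH game are genuinely product measurements (Alice and Bob act on their own subsystems), so that the product-effect factorisation applies and the correlations are of the local form above.
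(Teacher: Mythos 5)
Your proposal is correct and follows essentially the same route as the paper's proof: decompose the separable state as a convex combination of products, use the factorisation of product effects to show the resulting distribution is a local hidden-variable (shared-randomness) distribution with the index $i$ as the hidden variable, and then invoke the classical CHSH bound of $3/4$. The only cosmetic difference is your preliminary remark about reducing to extreme points via linearity, which you then correctly bypass in favour of the direct convex-decomposition argument that the paper also uses.
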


\begin{proof}
  A product measurement $M^{r_A,r_C}=\left\{e^{r_A}_a \otimes e^{r_C}_c \right\}_{a,c}$ on a separable state $S^{AC}=\sum_ip_iS^A_i\ot S^C_i,$ leads to probabilities
\begin{align}
p(a c | r_A r_C )_{P^{AC}} 
&=\sum_ip_ie^{r_A}_a(S^A_i) e^{r_C}_c(S^C_i)\\
&=\sum_ip_ip(a|r_A,i) p(c|r_C,i), \label{eq:distrsep}
\end{align}
where $p(a|r_A,i)=e^{r_A}_a(S^A_i)$ and $p(c|r_C,i)=e^{r_C}_c(S^C_i)$.  This distribution can be generated by sharing a source of classical randomness that outputs perfectly correlated values $i$ with probability $p_i$ to each party. Based on $i$ and their own inputs, $r_A$ and $r_B$ respectively, players $A$ and $C$ can sample from distributions $p(a|r_A,i)$ and $p(c|r_C,i)$.  The claim then follows because the winning probability of any classical strategy in any CHSH game is bounded by $3/4$~\cite{CHSH}.
\end{proof}

\begin{cor} \label{prop:minimal}
For any GPT where the joint state space is defined by the minimal tensor product, the winning probability in the adaptive CHSH game is upper bounded by $3/4$.
\end{cor}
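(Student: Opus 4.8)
The plan is to reduce the corollary directly to Lemma~\ref{lem:essential_entanglement}, by showing that in the minimal tensor product Bob's joint measurement on $BB'$ can never leave Alice and Charlie holding an entangled state. Consequently, conditioned on each of Bob's outcomes, Alice and Charlie are merely playing an ordinary CHSH-type game on a separable state, to which the lemma applies.

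First I would record what $\otimes_{\rm min}$ gives us: both shared resources are separable, so we may write $S_{AB}=\sum_i p_i\, S_A^i\ot S_B^i$ and $S_{B'C}=\sum_j q_j\, S_{B'}^j\ot S_C^j$. Since no tripartite resource is shared, the global state on $ABB'C$ is the product $S_{AB}\ot S_{B'C}=\sum_{i,j} p_i q_j\, S_A^i\ot S_B^i\ot S_{B'}^j\ot S_C^j$. Next I would compute the (unnormalised) state Alice and Charlie hold once Bob obtains a given outcome $b$. Bob's choice $B=b$ corresponds to applying an effect $e_b$ from his measurement to the $BB'$ subsystem, and by the post-measurement rule~\eqref{eq:condstate} the resulting conditional state on $AC$ is proportional to $(I_A\ot e_b\ot I_C)(S_{AB}\ot S_{B'C})=\sum_{i,j} p_i q_j\, e_b(S_B^i\ot S_{B'}^j)\, S_A^i\ot S_C^j$. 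The key observation is that every coefficient $p_i q_j\, e_b(S_B^i\ot S_{B'}^j)$ is non-negative, because effects map states into $[0,1]$, and these coefficients sum to the outcome probability $P(B=b)$. Hence, after normalising by $P(B=b)$, the state $S_{AC|b}$ is a genuine convex combination of the product states $S_A^i\ot S_C^j$, i.e.\ it is separable for every $b$.

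Finally, for each fixed outcome $b$ Alice and Charlie play one of the four CHSH-type games on the separable state $S_{AC|b}$, so by Lemma~\ref{lem:essential_entanglement} their conditional winning probability is at most $3/4$. Averaging over Bob's outcomes with weights $P(B=b)$ preserves this bound, giving $p_{\mathrm{win}}\leq 3/4$. The only slightly delicate point, and thus the main thing to get right, is the middle step: verifying that Bob's joint measurement, which acts on $BB'$ alone, cannot generate $A$--$C$ entanglement. This is exactly where the minimal-tensor-product assumption on \emph{both} resources is essential — together with the fact that $e_b$ contributes only non-negative weights — and it is what allows the single-state lemma to be applied separately in each of Bob's measurement branches.
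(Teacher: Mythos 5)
Your proof is correct and takes essentially the same route as the paper: the paper states Corollary~\ref{prop:minimal} as an immediate consequence of Lemma~\ref{lem:essential_entanglement}, implicitly relying on exactly the reduction you spell out, namely that with both resources separable (as forced by $\otimes_{\rm min}$) any effect of Bob's on $BB'$ — entangled or not — only rescales each product term $S_A^i\ot S_C^j$ by a non-negative number, so every conditional state $S_{AC|b}$ is separable and the $3/4$ bound follows branch by branch. Your explicit verification of this separability-preservation step (mirroring the computation the paper gives for Lemma~\ref{lem:entangled_effects} in the dual, maximal-tensor-product case) is precisely the detail the paper leaves to the reader.
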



\begin{lem}\label{lem:entangled_effects}
In any GPT, if Bob uses a separable effect, the states shared between Alice and Charlie in the adaptive CHSH game are separable.
\end{lem}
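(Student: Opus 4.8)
The plan is to exploit the product structure of the two shared resources together with the factorisation of product effects on product states recorded in Section~\ref{sec:bipartite}. First I would note that, since Alice shares $S_{AB}$ with Bob, Charlie shares $S_{B'C}$ with Bob, and no further (in particular no tripartite) resource is present, the global state on $ABB'C$ is the product $S_{AB}\ot S_{B'C}$. In particular the reduced state on $BB'$ is itself a product, $S^B\ot S^{B'}$, where $S^B=(u_A\ot I_B)S_{AB}$ and $S^{B'}=(I_{B'}\ot u_C)S_{B'C}$. By the conditioning rule~\eqref{eq:condstate} applied to the subsystem $BB'$, when Bob obtains the (nonzero-probability) outcome associated with an effect $E$ on $BB'$, the state left to Alice and Charlie is
\[
S_{AC}=\frac{(I_A\ot E\ot I_C)(S_{AB}\ot S_{B'C})}{E(S^B\ot S^{B'})}.
\]

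The key step is to use the hypothesis that $E$ is separable, i.e. that $E$ lies in the cone generated by product effects, so that we may write $E=\sum_k e^B_k\ot f^{B'}_k$ with $e^B_k\in\eff_B$ and $f^{B'}_k\in\eff_{B'}$ (absorbing the nonnegative conic coefficients into the $e^B_k$). Applying $I_A\ot E\ot I_C$ term by term and using that product effects act on products as $(e\ot f)(S\ot T)=e(S)f(T)$, each summand factorises across the $A|C$ cut:
\[
(I_A\ot e^B_k\ot f^{B'}_k\ot I_C)(S_{AB}\ot S_{B'C})=\tilde S^A_k\ot\tilde S^C_k,
\]
where $\tilde S^A_k=(I_A\ot e^B_k)S_{AB}$ and $\tilde S^C_k=(f^{B'}_k\ot I_C)S_{B'C}$ are, by~\eqref{eq:condstate} applied to $B$ and to $B'$ respectively, nonnegative multiples of valid states of $\sta_A$ and $\sta_C$.

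Finally I would collect the normalisations. Writing $\alpha_k=u_A(\tilde S^A_k)=e^B_k(S^B)$ and $\beta_k=u_C(\tilde S^C_k)=f^{B'}_k(S^{B'})$, the denominator equals $E(S^B\ot S^{B'})=\sum_k\alpha_k\beta_k$ precisely because $S_{BB'}$ is a product. Discarding the terms with $\alpha_k\beta_k=0$ and normalising the remainder gives
\[
S_{AC}=\sum_k\frac{\alpha_k\beta_k}{\sum_{k'}\alpha_{k'}\beta_{k'}}\;\hat S^A_k\ot\hat S^C_k,
\qquad \hat S^A_k=\tilde S^A_k/\alpha_k,\quad \hat S^C_k=\tilde S^C_k/\beta_k,
\]
a convex combination of product states with normalised $\hat S^A_k,\hat S^C_k$ and nonnegative weights summing to $1$. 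Hence $S_{AC}$ is separable, as claimed.

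I expect the only delicate points to be pure bookkeeping: justifying that each $\tilde S^A_k$ is a genuine sub-normalised state rather than merely a vector (which is exactly the content of the post-measurement rule~\eqref{eq:condstate} applied to subsystem $B$ of $S_{AB}$), and handling degenerate summands of zero weight. There is no conceptual obstacle here—the whole content is that a product effect acting on the product resource cannot generate any entanglement across the $A|C$ cut.
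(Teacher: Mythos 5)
Your proof is correct and takes essentially the same approach as the paper's: expand the separable effect as a combination of product effects, apply each term to the product resource so that it factorises across the $A|C$ cut, and collect the normalisations to exhibit $S_{AC}$ as a convex combination of product states. The only (immaterial) difference is that the paper's proof allows the global state on $ABB'C$ to be a mixture $\sum_j p_j S^{AB}_j\ot S^{B'C}_j$ rather than a single product, which is handled by the same bookkeeping with one extra index.
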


\begin{proof}
Applying a separable effect, i.e., $e^{BB'}_{\tilde{b}}=\sum_i s_i (e^{B}_{b_{i}} \otimes e^{B'}_{b'_{i}})$ with $s_i\geq0$ 
to the product of two entangled states, we obtain 
\begin{align*}
S^{AC|\tilde{b}}&=\frac{1}{N} (I_A \otimes e_{\tilde{b}}^{BB'} \otimes I_C)\left(\sum_{j} p_{j} S^{AB}_j \ot S^{B'C}_j\right)\\
&=\frac{1}{N} \sum_{i,j} s_i p_{j} (I_A \otimes e^B_{b_i})(S^{AB}_j)\ot(e^{B'}_{b'_i} \otimes I_C)(S^{B'C}_j)\\
&=\sum_{i,j} \tilde{p}_{i,j} S^{\tilde{A}}_{i,j} \ot S^{\tilde{C}}_{i,j},
\end{align*}
with $N=e_{\tilde{b}}^{BB'}(S^{BB'})$ where $S^{BB'}$ is the reduced state of $S^{ABB'C}=\sum_jp_jS^{AB}_j\ot S^{B'C}_j$ on the subsystem $BB'$ (i.e., $S^{BB'}=(u_A\otimes I_B\otimes I_{B'}\ot u_C)S^{ABB'C}$); furthermore $S^{\tilde{A}}_{i,j}=(I_A \otimes e^B_{b_i})(S^{AB}_j) / e^{B}_{b_i}(S^{B}_j)$, $S^{\tilde{C}}_{i,j}=(e^{B'}_{b'_i} \ot I_C)(S^{B'C}_j) / e^{B'}_{b'_i}(S^{B'}_j)$ and $\tilde{p}_{i,j}=s_i p_j e^{B'}_{b'_i}(S^{B'}_j) e^{B}_{b_i}(S^{B}_j )/N$. Notice that $\tilde{p}_{i,j} \geq 0 \ \forall i,j$ and thus $\sum_{i,j}\tilde{p}_{i,j} =1$, as $S^{AC|\tilde{b}}$ is a state.
\end{proof}

\begin{cor} \label{prop:maximal}
For any GPT where the joint state space is defined by the maximal tensor product, the winning probability in the adaptive CHSH game is upper bounded by $3/4$.
\end{cor}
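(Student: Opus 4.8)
The plan is to reduce the statement to Lemmas~\ref{lem:essential_entanglement} and~\ref{lem:entangled_effects} by exploiting the cone duality between the minimal and maximal tensor products. The key observation is that in a maximal-tensor-product theory every effect available to Bob on $BB'$ is separable, so the conditional state he prepares for Alice and Charlie is always separable, whence Lemma~\ref{lem:essential_entanglement} caps the CHSH winning probability at $3/4$ for each of his outcomes.

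First I would determine which effects Bob may apply on $BB'$. In a maximal-tensor-product theory any two local systems compose via $\ot_{\rm max}$, so the joint state space of $BB'$ is $\sta_B\ot_{\rm max}\sta_{B'}$, i.e.\ the set of normalised states on which all product effects take values in $[0,1]$. As a cone this is by definition the dual of the cone generated by the product effects, namely the dual of the minimal tensor product $\eff_B\ot_{\rm min}\eff_{B'}$ of the local effect cones. Dualising once more, the effects assigning a valid probability to every state of $\sta_B\ot_{\rm max}\sta_{B'}$ are precisely the separable ones, so Bob's effect can be written $e^{BB'}_{\tilde b}=\sum_i s_i\,(e^B_{b_i}\ot e^{B'}_{b'_i})$ with $s_i\geq0$ and $\sum_i s_i=1$.

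Next I would invoke Lemma~\ref{lem:entangled_effects}: applying such a separable effect on $BB'$ to the product of the (possibly entangled) resources $S^{AB}$ and $S^{B'C}$ yields, for each outcome $\tilde b$, a separable post-measurement state $S^{AC|\tilde b}$ on Alice's and Charlie's systems. By Lemma~\ref{lem:essential_entanglement} any separable state wins a CHSH-type game with probability at most $3/4$. Since the winning probability of the adaptive CHSH game is the average over Bob's outcomes $\tilde b$ of the CHSH winning probabilities obtained from the conditional states $S^{AC|\tilde b}$---each of which is separable---it follows that $p_{\mathrm{win}}\leq3/4$.

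The step I expect to be the main obstacle is the duality argument, since one must verify carefully that an effect giving valid probabilities on all states of $\ot_{\rm max}$ is necessarily separable, a statement about the dual cone of $\ot_{\rm max}$ rather than about product states alone. Once this min--max cone duality is in place, the remaining two steps follow immediately from the lemmas already established.
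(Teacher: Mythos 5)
Your proposal is correct and follows essentially the same route as the paper: the paper's proof also rests on the observation that the maximal tensor product admits no entangled effects (the cone-duality fact you spell out), after which Lemma~\ref{lem:entangled_effects} makes every post-measurement state on $AC$ separable and Lemma~\ref{lem:essential_entanglement} caps the winning probability at $3/4$. The only difference is one of exposition—the paper asserts the absence of entangled effects without writing out the min--max duality argument, which you make explicit.
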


\noindent This follows immediately from the lack of entangled effects in the maximal tensor product by Lemma~\ref{lem:entangled_effects}.

Thus, according to Corollaries~\ref{prop:minimal} and~\ref{prop:maximal}, in a GPT where the joint state space is defined by the minimal or by the maximal tensor product no strategy can beat the winning probability of $3/4$.

\section{Post-measurement states on $AC$}\label{app:2}
\begin{lem} \label{lem:postmeas}
	In the adaptive CHSH game, the set of post-measurement states on the joint system $AC$ is within the convex hull of those states obtained by considering extremal joint states on $AB$ and $B'C$ and an extremal joint effect $e^{BB'}$. 
\end{lem}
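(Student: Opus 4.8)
The plan is to exploit the fact that the map sending the shared resources and Bob's effect to the (unnormalised) post-measurement state on $AC$ is multilinear, and then to re-express the normalised result as an explicit convex combination.

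First I would fix arbitrary joint states $S_{AB}$ and $S_{B'C}$ and an arbitrary joint effect $e^{BB'}$, and write the unnormalised post-measurement state as $\tilde{S}_{AC} = (I_A \ot e^{BB'} \ot I_C)(S_{AB} \ot S_{B'C})$. The key observation is that the right-hand side is linear in each of $S_{AB}$, $S_{B'C}$ and $e^{BB'}$ separately: the assignment $(S_{AB}, S_{B'C}) \mapsto S_{AB}\ot S_{B'C}$ is bilinear, and applying a fixed collection of effects and identities is linear.

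Next, I would decompose each argument into extremal components. Since the state spaces are compact convex, write $S_{AB} = \sum_i \lambda_i S^{(i)}_{AB}$ and $S_{B'C} = \sum_j \mu_j S^{(j)}_{B'C}$ as convex combinations of extremal states, and write Bob's effect as $e^{BB'} = \sum_k \nu_k e^{(k)}$ with each $e^{(k)}$ extremal and $\nu_k \geq 0$ (this is a conic combination, which is all that is needed below and which covers both the case of a bounded effect space and the case of a cone of non-normalised effects as used in the proof of Theorem~\ref{thm:gbits}). Multilinearity then gives $\tilde{S}_{AC} = \sum_{i,j,k} \lambda_i \mu_j \nu_k\, \tilde{S}^{(i,j,k)}_{AC}$, where each $\tilde{S}^{(i,j,k)}_{AC} = (I_A \ot e^{(k)} \ot I_C)(S^{(i)}_{AB} \ot S^{(j)}_{B'C})$ is the unnormalised post-measurement state produced by extremal inputs.

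Finally I would renormalise. Setting $N = u_{AC}(\tilde{S}_{AC})$ and $N_{ijk} = u_{AC}(\tilde{S}^{(i,j,k)}_{AC})$ and applying $u_{AC}$ to the expansion gives $N = \sum_{i,j,k} \lambda_i \mu_j \nu_k N_{ijk}$. Dropping the terms with $N_{ijk}=0$ (which contribute nothing to $\tilde{S}_{AC}$) and dividing by $N$ yields $S_{AC} = \sum_{i,j,k} \frac{\lambda_i \mu_j \nu_k N_{ijk}}{N}\cdot \frac{\tilde{S}^{(i,j,k)}_{AC}}{N_{ijk}}$, a combination whose weights are non-negative and sum to $N/N = 1$, and whose summands are genuine normalised post-measurement states built from extremal states on $AB$ and $B'C$ and an extremal joint effect. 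Thus $S_{AC}$ lies in the desired convex hull. I expect the only delicate point to be the bookkeeping around normalisation — verifying that the zero-probability terms are harmless and that the renormalised weights form a bona fide probability distribution — rather than any conceptual difficulty.
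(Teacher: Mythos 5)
Your proof is correct and takes essentially the same route as the paper's: both rest on the (multi)linearity of the map $(S_{AB},S_{B'C},e^{BB'})\mapsto (I_A\ot e^{BB'}\ot I_C)(S_{AB}\ot S_{B'C})$, decomposition of the two shared states and of Bob's effect into extremal components, and explicit bookkeeping showing that renormalisation turns the expansion into a genuine convex combination (including the observation that zero-probability terms vanish). The only difference is organisational: the paper first handles the state decomposition for a fixed extremal effect and then decomposes the effect in a second step, whereas you treat all three arguments simultaneously in a single multilinear expansion.
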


\begin{proof}
  Any joint state on $AC$ obtained in the adaptive CHSH game can be written as
  \begin{equation}
    S^{AC}=\frac{(I_A\ot e^{BB'}\ot I_C) (S^{AB}\ot S^{B'C})}{e^{BB'}(S^B\ot S^{B'})},
  \end{equation}
  where $S^{B}$ and $S^{B'}$ are the reduced states of $S^{AB}$ and $S^{B'C}$ respectively.  All states $S^{AB}\ot S^{B'C}$ live in the state space
  \begin{equation}
    \sta^{ABB'C}=\operatorname{conv} \left(\left\{S^{AB} \ot S^{B'C} \mid S^{AB} \in \ext(\sta^{AB}), S^{B'C} \in \ext(\sta^{B'C})\right\} \right),
  \end{equation}
  where $\operatorname{conv}$ denotes the convex hull and $\ext(\sta)$ represents the set of extremal states of $\sta$.  Since $I_A\ot e^{BB'}\ot I_C$ is linear, the extremal states that can be obtained upon application of $I_A\ot e^{BB'}\ot I_C$ to states from the convex set $\sta^{ABB'C}$ are all images of extremal states. This also holds if we normalise all post-measurement states, as can be checked with a straightforward calculation. Hence, each extremal $e^{BB'}$ allows us to construct a convex set by considering its application to all extremal states of $\sta^{ABB'C}$. The union of these sets is within the convex hull of all their extremal points.
	
  Furthermore, for any non-extremal effect $\hat{e}^{BB'}=p e^{BB'} +(1-p) \tilde{e}^{BB'}$, the normalised post-measurement state is
  \begin{align}
    \hat{S}^{AC}
         &= \frac{ p (I_A \ot e^{BB'} \ot I_C) (S^{AB} \ot S^{B'C})+(1-p) (I_A\ot \tilde{e}^{BB'}\ot I_C) (S^{AB} \ot S^{B'C})}{p e^{BB'} (S^B\ot S^{B'})+(1-p)  \tilde{e}^{BB'} (S^B\ot S^{B'})}  \\
         &=\tilde{p} S^{AC} + (1-\tilde{p}) \tilde{S}^{AC},
  \end{align} 
  where $\tilde{p}= (p e^{BB'} (S^B\ot S^{B'})) / ({p e^{BB'} (S^B\ot S^{B'})+(1-p) \tilde{e}^{BB'} (S^B\ot S^{B'})})$.
	
  Thus, the set of post-measurement states is within the convex hull of all post-measurement states that are obtained from extremal effects $e^{BB'}$ applied to the extremal states of $\sta^{ABB'C}$.
\end{proof}

\end{document}